\theoremstyle{plain}
\newtheorem{theorem}{Theorem}[section]
\newtheorem{proposition}[theorem]{Proposition}
\newtheorem{corollary}[theorem]{Corollary}
\newtheorem{lemma}[theorem]{Lemma}
\theoremstyle{definition}
\newtheorem{definition}[theorem]{Definition}
\newtheorem{assumption}[theorem]{Assumption}
\newtheorem{example}[theorem]{Example}
\theoremstyle{remark}
\newtheorem{remark}[theorem]{Remark}
\numberwithin{equation}{section}
\newcommand{\R}{\mathbb{R}}
\newcommand{\E}{\mathds{E}}
\newcommand{\F}{\mathcal{F}}
\newcommand{\ind}{\mathds{1}}
\DeclareMathOperator{\Var}{Var}
\DeclareMathOperator{\diag}{diag}
\title{A structural Heath-Jarrow-Morton framework for consistent intraday, spot, and futures electricity prices}
\author[1,2]{W.J. Hinderks\thanks{Corresponding author: \url{wieger.hinderks@itwm.fraunhofer.de}}}
\author[1]{A. Wagner}
\author[1,2]{R. Korn}
\affil[1]{\small Fraunhofer ITWM, Fraunhofer-Platz 1, 67663 Kaiserslautern, Germany}
\affil[2]{\small TU Kaiserslautern, Erwin-Schrödinger-Straße 1, 67663 Kaiserslautern, Germany}
\date{\today}
\begin{document}

\maketitle

\begin{abstract}
\noindent In this paper we introduce a flexible HJM-type framework that allows for consistent modelling of intraday, spot, futures, and option prices. This framework is based on stochastic processes with economic interpretations and consistent with the initial term structure given in the form of a price forward curve. Furthermore, the framework allows for existing day-ahead spot price models to be used in an HJM setting. We include several explicit examples of classical spot price models but also show how structural models and factor models can be formulated within the framework.

\vspace{1em}

\noindent {\bfseries Keywords:} Heath-Jarrow-Morton framework, electricity markets, intraday prices, day-ahead spot prices, futures prices, option prices, structural model, factor model
\end{abstract}



\section{Introduction} \label{section:Introduction}

In recent years the electricity intraday markets have gained increased popularity: the traded volume at the German/Austrian intraday market has grown by 30.3 percent from May~2016 to May~2018 \citep{EPEX_PR20170601,EPEX_PR20180601}. Since different electricity contracts exhibit different price behaviour such as spikes in the day-ahead spot but not in futures prices, it is a rising challenge in energy finance to define a single model that allows for a joint simulation of power prices at intraday, spot, and futures markets.

In this paper we suggest a Heath-Jarrow-Morton framework for modelling electricity prices. The framework is consistent with the current forward term structure (i.e. the price forward curve) and we motivate each mathematical component by an economic interpretation. Furthermore, we discuss the computation of intraday, spot, and futures prices within this framework and we show how options on futures contracts can be priced. A new approach is the use of structural models for day-ahead spot price modelling within a Heath-Jarrow-Morton framework.

The starting point for a Heath-Jarrow-Morton~(HJM\footnote{See \cite{Heath1992} for the original paper introducing this framework for interest rate modelling.}) approach for electricity prices is the fictitious \emph{forward price} or \emph{forward kernel}.\footnote{Forward kernel is the name used by \cite{Caldana2017}.} The forward kernel~$f_t(\tau)$, $t \leq \tau$, is the price at time~$t$ of a forward contract delivering electricity instantly at time~$\tau$. It follows that the price at $t$ of a futures contract delivering from $\tau_1$ to $\tau_2$ is the averaged forward kernel during the delivery period, i.e.
\begin{equation} \label{eq:introductionfuturesprice}
F_t(\tau_1, \tau_2) =  \frac{1}{\tau_2 - \tau_1} \, \int_{\tau_1}^{\tau_2} f_t(u) \, du, \quad t \leq \tau_1.
\end{equation}
In the HJM framework for interest rates the forward rate is modelled instead of the short rate (cf. \cite{Brigo2006}). Therefore, modelling the forward kernel instead of the day-ahead spot price\footnote{Modelling of the day-ahead spot price is a common approach, for which several different approaches have been developed, cf. \cite{Weron2014}.} makes this an HJM approach for power prices. Furthermore, just like in the HJM framework for interest rates, the forward kernel itself is not a traded product at the market but its (integrated) derivatives are.

Several models for the forward kernel~$f_t(\tau)$ have been introduced by \cite{Clewlow1999,BenthKoekebakker2008,Kiesel2009,Hinz2005,Koekebakker2005}. They define the forward kernel dynamics driven by Brownian motions. However, since the day-ahead spot prices show spikes, these models have drawbacks. Therefore, there is a need for a forward kernel model that allows for spikes in relatively short delivery periods (day-ahead spot contracts) but smooths these out for longer delivery periods (futures contracts). The theoretical HJM framework of \cite{BenthPiccirilli2017a} introduces forward kernel dynamics driven by Brownian motions and pure jump L\'evy processes. However, \cite{BenthPiccirilli2017a} assume that day-ahead spot and futures contracts are priced under two different measures. Motivated by economic arguments, this ambiguity is avoided in our approach.

In the literature the use of more than one probability measure has also been challenged: \cite{Lyle2009,Caldana2017} assume a single probability measure, for example. This is supported by the fact that it is not clear which equivalent measure should be the pricing measure~$Q$. Since electricity is a non-storable commodity and buy-and-hold strategy arguments are not valid, it is not clear what the relation between the price of electricity contracts and the money market account is \citep{Bessembinder2002}. This also implies that the market is incomplete and that there are (possibly) infinitely many equivalent martingale measures. Again, this leaves the choice of pricing measure unclear.

\begin{figure}[t]
\centering
\begin{tikzpicture}
\draw[gray,thick] (2,0.1) -- (2,-0.1);
\node[above] at (2,0.1) {\tiny$d-1$};

\draw[gray,thick] (5.5,0.1) -- (5.5,-0.1);
\node[above] at (5.5,0.1) {\tiny$d$};

\draw[gray,thick] (9,0.1) -- (9,-0.1);
\node[above] at (9,0.1) {\tiny$d+1$};

\draw[red,thick] (2.8,0.1) -- (2.8,-0.1);
\node[above] at (2.8,0.1) {\rotatebox{90}{\tiny 10:00 EXAA}};

\draw[red,thick] (3.5,0.1) -- (3.5,-0.1);
\node[above] at (3.5,0.1) {\rotatebox{90}{\tiny 12:00 EPEX}};

\node[below] at (3.1,-0.1) {\tiny EPEX day-ahead};

\draw[thick,red] (4.2,0.1) -- (4.2,-0.1);
\node[above] at (4.2,0.1) {\rotatebox{90}{\tiny 15:00 EPEX}};
\node[above] at (4.5,0.1) {\rotatebox{90}{\tiny quarter hour}};
\draw[red,dashed] (4.2,-0.1) -- (8.1,-0.1);
\node[below] at (6.15,-0.1) {\tiny EPEX intraday};

\draw[thick] (8.1,0.1) -- (8.1,-0.1);
\node[above] at (8.1,0.1) {$\tau$};

\draw[densely dotted,thick,->] (-0.5,0) -- (0,0);
\draw[thick,->] (0,0) -- (10,0);
\node[above] at (0,0.1) {$t$};

\draw[red,dashed] (-0.5,-0.1) -- (2,-0.1);
\node[below right,text width=2cm] at (0,-0.1) {\tiny EEX futures};

\node[] at (4.75,2) {\textbf{Observation structure of $f_t(\tau)$}};
\end{tikzpicture}
\caption{Observation structure of~$f_t(\tau)$ for the German electricity market and a fixed delivery time~$\tau$. The red marked lines and time points are the (indirect) observation moments. The lines with~$d-1$ and $d$ stand for the start of day~$d-1$ and $d$.}
\label{fig:marketdevelopmentofforwardprice}
\end{figure}
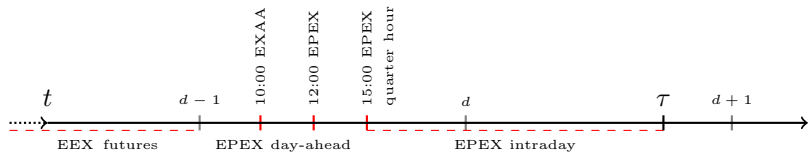

We follow the idea of \cite{Caldana2017} that the prices of day-ahead spot and futures contracts both should be computed by Equation~\eqref{eq:introductionfuturesprice}. This actually sounds intuitively since, for example at the German markets, day-ahead spot contracts are traded at least twelve hours before delivery. In other countries such as the US the terminology is different: the day-ahead spot price is commonly referred to as the forward price~\citep{Longstaff2004}. Even in Europe, with the increasing popularity of the intraday markets, we observe a shift in terminology: \cite{Weron2014} remarks that the term \emph{spot} is used more and more frequently for the real-time or intraday market. We will always explicitly state to which spot market we refer.

In this paper we even propose to extend Equation~\eqref{eq:introductionfuturesprice} to the intraday market. Figure~\ref{fig:marketdevelopmentofforwardprice} gives an example of the development of the forward kernel~$f_t(\tau)$ and how it becomes observable at the German/Austrian market. First the forward kernel~$f_t(\tau)$ is only (partly) observable through EEX~futures contracts. Then the Austrian EXAA and two German EPEX day-ahead spot auctions\footnote{Of course, it is not clear what the roles the EXAA and EPEX will play for each other after the announced market division. Press release: \url{https://www.bundesnetzagentur.de/SharedDocs/Pressemitteilungen/EN/2017/15052017_DE_AU.html} (visited on March 26 2018).} are held, after which the EPEX intraday spot market opens.

Furthermore, we show how the classical models described by \cite{Schwartz2000,Lucia2002} fit into our framework. We also show how other more general day-ahead spot price models can be used to fit into our model. A particular new example we introduce in this paper, is to use structural models in the context of an HJM framework. We also apply our framework to the setting of multi-factor models.

This paper is structured as follows: Section~\ref{section:HJMframework} introduces a model for the forward kernel based on the economic intuition that there are two driving components behind the forward kernel. The first component is the equilibrium of supply and demand at delivery time and the second is a general noise from partially informed traders or illiquidity at trading time~$t$. Successively, in Section~\ref{section:futures} and Section~\ref{section:options} the futures and option prices are computed, respectively. Section~\ref{section:applications} contains the above explicitly mentioned examples for the market equilibrium process, while Section~\ref{section:Conclusion} concludes.

\section{Heath-Jarrow-Morton framework} \label{section:HJMframework}
In Section~\ref{section:instaneousforwardpricetheory} we will define a model for the forward kernel motivated by economic interpretations. Using this model in Sections~\ref{section:futures} and \ref{section:options} we derive the prices of futures contracts and options on futures contracts, respectively. Section~\ref{section:OverviewContracts} gives an overview of the prices for different electricity contracts for the example of the German market.

\subsection{Forward kernel} \label{section:instaneousforwardpricetheory}
The forward kernel~$f_t(\tau)$ is the price at time~$t$ of a forward contract delivering 1~MW instantly at time~$\tau$. Throughout the rest of this paper we interpret~$t$ as the \emph{trading time} and $\tau$ as the \emph{delivery time}.

For $\tau \geq 0$ let $X^\tau = \{ X_t^\tau ; t \geq 0\}$ and $Y = \{ Y_t; t \geq 0 \}$ be two independent, a.s. c\`adl\`ag stochastic processes on the complete probability space~$(\Omega, \F, P)$ taking values in $\R$ and $\R^n$, respectively. Furthermore, assume that the processes $X^\tau$ for each $\tau \geq 0$ and $Y$ are adapted to the filtration~$\{\F_t ; t\geq 0 \}$, which satisfies the usual conditions, i.e. $\{\F_t ; t\geq 0 \}$ is right-continuous and $\F_0$ contains all $P$-null sets. The filtration generated by $Y$ and $X^\tau$ augmented by all $P$-null sets automatically fulfills these conditions. Finally, let $g \, : \, \R^n \to \R$ be a function such that $g(Y_t)$ is real-valued stochastic process.

We have two strong economic interpretations for these two stochastic processes: we interpret the $n$-dimensional process~$Y_t$ as the randomness or the state of the market, where each component of $Y_t$ stands for a (random) facet of the market, e.g. demand, load, or weather predictions. The function~$g$ maps the state of the market state~$Y_t$ to its corresponding price. Combining the fact that our inspiration came from the class of structural models for day-ahead spot price modelling and the fact that it gives the basic structure to the forward kernel, we call the pair~$(g, Y_t)$ the \emph{structural component}. Often we will also only call~$Y_t$ the structural component. 

The process~$X_t^\tau$ is called the \emph{market noise} because it accounts for the incomplete market information of all market participants and illiquidity of the market. An example of incomplete market information is the uncertainty of weather predictions: nobody knows with complete certainty about the future weather or temperature. With these interpretations we define the forward kernel:

\begin{definition}[Forward kernel] \label{definition:instantaneousforward}
We define the \emph{forward kernel} at trading time~$t$ and delivery time~$\tau$ as 
\[
f_t(\tau) :=  X_t^\tau \, \E\left[ g(Y_\tau) \, | \, \F_t \right],
\]
where $X_t^\tau$ is the \emph{market noise} at trading time~$t$ for the delivery time~$\tau$ and $(g, Y_\tau)$ the \emph{structural component} at delivery time~$\tau$.
\end{definition}


We use the notation~$X_t^\tau$ to emphasize that the market noise is a stochastic process in the trading time~$t$ but can (deterministically) depend on the delivery time~$\tau$, whereas the structural component~$Y_\tau$ only depends on delivery time. Economically, this makes sense since the imbalance of supply and demand at delivery time~$\tau$ determines the price independent of the trading time~$t$ at which we predict this imbalance. However, the market noise is the disturbance of this prediction originating from market participants with incomplete market information, which intuitively depends on both the trading time~$t$ and the delivery time~$\tau$ they are trying to predict.\footnote{This also allows for seasonal volatility in the market noise.} Although we call $X_t^\tau$ the market noise, it can also be interpreted as a measure transformation (or Radon-Nikodym derivative, see Remark~\ref{remark:ChangeOfMeasure}) or as a general additional component that introduces an additional degree of freedom in the modelling process.

\begin{assumption}[Market noise] \label{assumption:marketnoise}
The process~$X^\tau = \{ X_t^\tau ; t \geq 0\}$ with its interpretation as market noise for delivery time~$\tau$ is defined as multiplicative stochastic noise. We assume that it is an a.s. positive c\`adl\`ag martingale with expectation one, i.e. $\E X_t^\tau = 1$ for all $\tau \geq t \geq 0$. In particular, we assume that the initial value $X_0^\tau = 1$ a.s. for all $\tau \geq 0$. 
\end{assumption}

\begin{assumption}[Structural component] \label{assumption:marketimbalance}
We assume that $Y = \{ Y_t; t\geq 0\}$ is a $\R^n$-valued c\`adl\`ag stochastic process. In particular, we assume that the initial value equals $Y_0 = y_0 \in \R^n$ a.s. such that $g(y_0) = f_0(0)$, where~$f_0(\tau)$ is the price forward curve~(PFC) for delivery time~$\tau$, which we assume to be known (cf. Remark~\ref{remark:PFCConstruction}). Furthermore, as a technical assumption we need that $\E |g(Y_t)| < \infty$ for all $t\geq 0$. Finally, although we assume that $g(Y_t)$ can take all values in~$\R$, including negative values, we assume that its expectation~$\E g(Y_t) > 0$ is strictly positive. The economic interpretation behind this assumption is that we do not \emph{expect} negative forward prices to occur.
\end{assumption}

With these assumptions the sign of the forward kernel is uniquely determined by the structural component~$Y$ and the process $X^\tau$ cannot influence it. Furthermore, the expectation $\E f_t(\tau)$ is fully determined by the structural component~$Y_\tau$ and independent of trading time~$t$ (cf. Lemma~\ref{lemma:expectationinstantaneousforward}).

\begin{remark}[Price forward curve~$f_0(\tau)$] \label{remark:PFCConstruction}
In the framework the price forward curve~(PFC), denoted by~$f_0(\tau)$, plays an important role: it determines the expectation of the forward kernel~$f_t(\tau)$. There are many studies that describe how one can construct a PFC from market prices such as \cite{Caldana2017,Kiesel2018}, for example. In practice every energy utility has an in-house PFC. In the following we will therefore assume that the PFC is known.
\end{remark}

\begin{theorem}
For fixed~$\tau \geq 0$ the forward kernel process~$f(\tau) := \{ f_t(\tau); t \geq 0 \}$ is an adapted stochastic process. Furthermore, $f(\tau)$ is a.s. c\`adl\`ag.
\end{theorem}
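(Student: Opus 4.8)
The plan is to write $f_t(\tau) = X_t^\tau \cdot M_t$, where $M_t := \E[g(Y_\tau) \mid \F_t]$, and to establish the two claimed properties — adaptedness and the c\`adl\`ag property — separately for each factor, then combine.

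First I would handle $M_t$. Since $\E|g(Y_\tau)| < \infty$ by Assumption~\ref{assumption:marketimbalance}, the conditional expectation $M_t = \E[g(Y_\tau)\mid \F_t]$ is well-defined, and $M = \{M_t; t\ge 0\}$ is a uniformly integrable martingale with respect to $\{\F_t; t\ge 0\}$. In particular each $M_t$ is $\F_t$-measurable, so $M$ is adapted. The filtration satisfies the usual conditions (stated in the setup), so by the standard martingale regularization theorem (e.g. the c\`adl\`ag modification theorem for martingales on a filtration satisfying the usual conditions) $M$ admits a c\`adl\`ag modification; I would note that we work with this modification throughout, which is the natural convention since the forward kernel is only specified up to modification anyway. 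The process $X^\tau$ is assumed a.s. c\`adl\`ag and adapted directly in Assumption~\ref{assumption:marketnoise} and the ambient hypotheses, so nothing is needed there.

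Then I would combine the two. Adaptedness of $f(\tau)$ is immediate: for each fixed $t$, $f_t(\tau) = X_t^\tau M_t$ is a product of two $\F_t$-measurable random variables, hence $\F_t$-measurable. For the c\`adl\`ag property, fix $\omega$ outside the null set on which either $X^\tau$ or $M$ fails to be c\`adl\`ag. On the remaining set, $t \mapsto X_t^\tau(\omega)$ and $t\mapsto M_t(\omega)$ are both c\`adl\`ag, and the pointwise product of two c\`adl\`ag real-valued functions is c\`adl\`ag: right-continuity of the product follows from right-continuity of each factor and continuity of multiplication, and existence of left limits follows because $\lim_{s\uparrow t} X_s^\tau(\omega) M_s(\omega) = \bigl(\lim_{s\uparrow t} X_s^\tau(\omega)\bigr)\bigl(\lim_{s\uparrow t} M_s(\omega)\bigr)$, each left limit existing and being finite. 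Hence $f(\tau)$ is a.s. c\`adl\`ag.

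The only genuine subtlety — and what I would flag as the main point rather than an obstacle — is the passage from "$M$ is a martingale" to "$M$ has a c\`adl\`ag modification," which relies essentially on the usual conditions; without right-continuity and completeness of the filtration one could not guarantee a regular version. I would make sure to invoke this explicitly and to state that $f(\tau)$ is henceforth identified with the version built from the c\`adl\`ag modification of $M$. Everything else is routine measure-theoretic bookkeeping and the elementary fact that products of c\`adl\`ag functions are c\`adl\`ag.
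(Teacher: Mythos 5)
Your proposal is correct and follows essentially the same route as the paper: factor $f_t(\tau)$ into the market noise and the conditional-expectation martingale, invoke the c\`adl\`ag modification theorem for martingales under the usual conditions (the paper cites Karatzas--Shreve, Chapter 1, Theorem 3.13), fix that modification since the conditional expectation is only defined up to null sets, and conclude via measurability and c\`adl\`ag-ness of products. Nothing is missing.
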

\begin{proof}
By definition~$f(\tau)$ is a stochastic process. Moreover, since we assumed $X_t^\tau$ to be $\F_t$-measurable and since the conditional expectation~$Z_t^\tau := \E[g(Y_\tau) \, | \, \F_t ]$ is always $\F_t$-measurable, the $\F_t$-measurability of $f_t(\tau)$ follows immediately. Because the filtration satisfies the usual conditions, $Z_t^\tau$ has a c\`adl\`ag modification~\cite[Chapter 1, Theorem 3.13]{Karatzas1998}. Since the conditional expectation~$Z_t^\tau$ is uniquely defined up to null sets, we can choose this modification and the result follows by the assumption that $X_t^\tau$ is c\`adl\`ag.
\end{proof}

Since we assume that $X^\tau$ and $Y$ both a.s. start at a deterministic value, we assume without loss of generality that $\F_0$ is generated by $\Omega$ and all $P$-null sets. This in particular implies that $\E g(Y_\tau) =  \E\left[ g(Y_\tau) \, | \, \F_0 \right]$, a fact we will exploit in the next lemma.

\begin{lemma} \label{lemma:expectationinstantaneousforward}
For fixed~$\tau \geq 0$ the forward kernel process~$f(\tau) := \{ f_t(\tau); t \geq 0 \}$ is a martingale. Furthermore, its expectation is given by
\[
\E f_t(\tau) = \E g(Y_\tau) =  f_0(\tau)
\]
for all $0 \leq t \leq \tau$.
\end{lemma}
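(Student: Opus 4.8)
The plan is to combine two ingredients: the martingale property of the market noise $X^\tau$ (Assumption~\ref{assumption:marketnoise}) and the tower property of conditional expectations applied to the inner factor $Z_t^\tau := \E[g(Y_\tau)\mid\F_t]$. First I would observe that $Z^\tau = \{Z_t^\tau; t\ge 0\}$ is, by construction, a (closed) martingale on $[0,\tau]$: it is the conditional expectation of the fixed integrable random variable $g(Y_\tau)$ — integrable by the technical assumption $\E|g(Y_t)|<\infty$ in Assumption~\ref{assumption:marketimbalance} — so the tower property gives $\E[Z_s^\tau\mid\F_t] = Z_t^\tau$ for $t\le s\le\tau$. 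I would then want to conclude that the product $f_t(\tau) = X_t^\tau Z_t^\tau$ is again a martingale.

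The key step, and the one requiring a little care, is the independence argument used to handle the product. Since $X^\tau$ and $Y$ are assumed independent and both are adapted to $\{\F_t\}$, I would argue that for $t\le s\le\tau$,
\[
\E[f_s(\tau)\mid\F_t] = \E[X_s^\tau Z_s^\tau\mid\F_t].
\]
To evaluate this I would first condition on the larger $\sigma$-algebra $\F_t \vee \sigma(X_u^\tau; u\le s)$: given this information, $Z_s^\tau$ has conditional expectation equal to its expectation conditional on $\F_t$ together with the $X$-path, which by independence of the two processes reduces to $\E[Z_s^\tau\mid\F_t] = Z_t^\tau$; then pulling out the (now $\F_t\vee\sigma(X_u^\tau;u\le s)$-measurable, hence usable) factor $X_s^\tau$ and taking the outer conditional expectation leaves $Z_t^\tau\,\E[X_s^\tau\mid\F_t] = Z_t^\tau X_t^\tau = f_t(\tau)$, using that $X^\tau$ is an $\F_t$-martingale. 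Integrability of the product $f_t(\tau)$ needed to make these manipulations legitimate follows from $\E|f_t(\tau)| = \E[X_t^\tau |Z_t^\tau|] = \E[X_t^\tau\,\E[|Z_t^\tau|\mid \text{$X$-path}]]$, and again by independence and $\E X_t^\tau = 1$ this is bounded by $\E|g(Y_\tau)| < \infty$; positivity of $X^\tau$ is what makes these bounds clean.

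For the expectation formula I would simply take $t=0$ in the martingale identity, or directly compute $\E f_t(\tau) = \E[X_t^\tau Z_t^\tau]$. By independence this factors as $\E X_t^\tau \cdot \E Z_t^\tau = 1\cdot \E[\E[g(Y_\tau)\mid\F_t]] = \E g(Y_\tau)$. Finally, using the remark immediately preceding the lemma that $\F_0$ is (w.l.o.g.) trivial, $\E g(Y_\tau) = \E[g(Y_\tau)\mid\F_0] = Z_0^\tau = f_0(\tau)$ since $X_0^\tau = 1$; alternatively this is just Assumption~\ref{assumption:marketimbalance} restated.

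The main obstacle is purely the measure-theoretic bookkeeping in the product-of-martingales step: one has to be careful that $Z_s^\tau$ is conditionally independent of the $X$-path given $\F_t$ in exactly the right sense, rather than hand-waving ``independence implies the product of martingales is a martingale'' (which is false in general, but true here precisely because the two factors depend on independent sources of randomness and each is individually a martingale). Making the nested-conditioning argument rigorous — conditioning on $\F_t \vee \sigma(X_u^\tau; u\le s)$ and invoking independence of the generating processes — is the only genuinely non-routine part; everything else is a direct application of Assumption~\ref{assumption:marketnoise}, the tower property, and the preceding remark.
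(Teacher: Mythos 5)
Your proof is correct and follows essentially the same route as the paper's: the paper's entire argument for the martingale property is the one-line assertion that ``the product of two independent martingales clearly is a martingale,'' followed by exactly your factorization $\E[X_t^\tau]\,\E\bigl[\E[g(Y_\tau)\,|\,\F_t]\bigr]=\E g(Y_\tau)=f_0(\tau)$ for the expectation. The only difference is that you actually justify that assertion via nested conditioning on $\F_t\vee\sigma(X_u^\tau;u\le s)$, and you are right to flag this as the delicate point: for a general common filtration the product of two independent martingales need not be a martingale (the filtration could encode a coupling between the two sources of randomness), so the argument implicitly relies on $\{\F_t\}$ carrying no more information than what $X^\tau$ and $Y$ jointly generate --- precisely the canonical choice the paper mentions when introducing the filtration.
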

\begin{proof}
The product of two independent martingales clearly is a martingale. Furthermore, it follows immediately from Assumption~\ref{assumption:marketnoise} and \ref{assumption:marketimbalance} that
\[
\E f_t(\tau) = \E\left[X_t^\tau\right] \, \E\left[\E\left[ g(Y_\tau) \, | \, \F_t \right]\right] = \E g(Y_\tau) =  X_0^\tau \, \E\left[ g(Y_\tau) \, | \, \F_0 \right] =  f_0(\tau) 
\]
by the independence of $X_t^\tau$ and $Y_t$.
\end{proof}

Lemma~\ref{lemma:expectationinstantaneousforward} also imposes a condition for the expectation~$\E g(Y_\tau)$ of the structural component, which can be used to calibrate the structural component~$Y$ and function~$g$ after the PFC~$f_0(\tau)$ has been determined. If one wants to obtain a model that is consistent with an existing PFC~$f_0(\tau)$, one needs to choose and calibrate $g$ and $Y$ such that $\E g(Y_\tau) = f_0(\tau)$. 

\begin{remark}[Change of measure] \label{remark:ChangeOfMeasure}
In the previous discussion we considered the measure space~$(\Omega, \F, P)$ equipped with the real-world measure~$P$. However, in arbitrage-free markets there is a pricing measure under which derivatives are valued. The \emph{$\tau$-forward measure}~$Q^\tau$ defined by its Radon-Nikodym derivative
\begin{equation} \label{eq:PricingMeasureDefinedByX}
\frac{dP}{dQ^\tau}\Big|_{\F_t} = X_t^\tau
\end{equation}
could be used for this purpose. Using the $\tau$-forward measure and Bayes' theorem for conditional expectations we can rewrite Definition~\ref{definition:instantaneousforward}
\[
f_t(\tau) = X_t^\tau \, \E_P[ g(Y_\tau) \, | \, \F_t] = \E_{Q^\tau}[ X_\tau^\tau \, g(Y_\tau) \, | \, \F_t].
\]
The latter term can be defined
\[
S_t := X_t^t \, g(Y_t),
\]
which yields a general spot price model. The choice of the stochastic process~$X_t^\tau$ can be viewed as the choice of a pricing measure~$Q^\tau$ in light of Equation~\eqref{eq:PricingMeasureDefinedByX}. If the noise~$X_t := X_t^\tau$ is chosen to be independent of the delivery time~$\tau$, so is the forward measure~$Q := Q^\tau$.
\end{remark}

\subsection{Futures contracts} \label{section:futures}
As discussed in Section~\ref{section:Introduction} the forward kernel can be used to compute the price of futures contracts. In the following we assume the interest rate to equal $r = 0$ for notational convenience. Of course, when one assumes $r \neq 0$, discounting has to be taken into account. In Remark~\ref{remark:Discounting} we have some notes on how to change our framework to include discounting. Furthermore, we assume that all prices are normalized, meaning that we assume all prices to be in Euro/MWh as usual.

\begin{definition}[Futures contract price] \label{definition:futuresprice}
For $0 \leq t \leq \tau_1 < \tau_2$ we call 
\[
F_t (\tau_1, \tau_2) :=   \frac{1}{\tau_2 - \tau_1} \int_{\tau_1}^{\tau_2} f_t(u) \, du
\]
the price of a futures contract at time~$t$ delivering 1~MW continuously from~$\tau_1$ to~$\tau_2$.
\end{definition}

Since we denote all prices in Euro/MWh, the price that one pays at time~$t$ when one buys a futures contract delivering 1 MW from~$\tau_1$ to $\tau_2$ is given by $(\tau_2 - \tau_1) \, F_t(\tau_1, \tau_2)$, where we assume that $\tau_2 - \tau_1$ is measured in hours.

\begin{example}[Day-ahead spot price] \label{example:dayaheadspotprice}
We compute the day-ahead spot price as a futures contract. It is auctioned at day~$d-1$ at hour~$a$ and delivered at day~$d$ from $h$:$00$ until $(h+1)$:$00$ o'clock, i.e. 
\[
S(d, h) := F_{t_{d-1}^a}\left(t_d^h, t_d^{h+1}\right).
\]
Here $t_d^h$ denotes the time at day~$d$ and hour~$h$. 
\end{example}

The next theorem shows that the framework is consistent with cascading.\footnote{By cascading we mean the way how futures with a longer delivery period are settled. For example, a calendar year futures contract cascades (or splits up) into three monthly futures~(January, February, and March) and three quarterly futures~(Q2, Q3, and Q4) upon start of delivery. This way, these can be traded independently again. In the German market monthly futures do not cascade. However, the settlement price at the end of the delivery is exactly the average of the day-ahead spot prices during delivery. This could be interpreted that also monthly futures are cascading to the hourly (day-ahead) spot contracts, since their price converges to this average.} It also shows that there are no arbitrage opportunities in the sense that the cost of a futures contract delivering for one year is the same as the cost of its four quarters, for example.

\begin{proposition}[Consistency of cascading] \label{theorem:noarbitrage}
Let $0 \leq \tau_0 < \tau_1 < \tau_2 < \dots < \tau_n$ be delivery times, then we have
\[
(\tau_n - \tau_0)  \, F_t(\tau_0,\tau_n) = \sum_{i=1}^n (\tau_i - \tau_{i-1})  \, F_t(\tau_{i-1}, \tau_i)
\]
for all $t \geq 0$.
\end{proposition}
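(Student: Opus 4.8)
The plan is to reduce the claimed identity to the elementary additivity of the Lebesgue integral over adjacent intervals. By Definition~\ref{definition:futuresprice}, each term $(\tau_i - \tau_{i-1})\,F_t(\tau_{i-1},\tau_i)$ equals $\int_{\tau_{i-1}}^{\tau_i} f_t(u)\,du$, so the right-hand side is $\sum_{i=1}^n \int_{\tau_{i-1}}^{\tau_i} f_t(u)\,du$. Similarly the left-hand side is $\int_{\tau_0}^{\tau_n} f_t(u)\,du$. Hence the statement is exactly
\[
\int_{\tau_0}^{\tau_n} f_t(u)\,du = \sum_{i=1}^n \int_{\tau_{i-1}}^{\tau_i} f_t(u)\,du,
\]
which holds by the additivity of the integral over the partition $\tau_0 < \tau_1 < \dots < \tau_n$ of $[\tau_0,\tau_n]$.

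First I would note that for this manipulation to be legitimate we only need $u \mapsto f_t(u)$ to be integrable on $[\tau_0,\tau_n]$ for each fixed $t$. This follows from the structure of the forward kernel: $f_t(u) = X_t^u\,\E[g(Y_u)\mid\F_t]$, and for fixed $t$ the map $u \mapsto f_t(u)$ is a.s.\ well defined; one can invoke the standing assumptions (Assumptions~\ref{assumption:marketnoise} and~\ref{assumption:marketimbalance} guarantee $\E|g(Y_u)| < \infty$ and $\E X_t^u = 1$) together with the implicit regularity needed for Definition~\ref{definition:futuresprice} to make sense in the first place — i.e.\ the integrals $\int_{\tau_1}^{\tau_2} f_t(u)\,du$ appearing in that definition are assumed finite. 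So no additional hypothesis is required beyond what is already in force.

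Second, I would carry out the telescoping explicitly: starting from $\sum_{i=1}^n \int_{\tau_{i-1}}^{\tau_i} f_t(u)\,du$, apply the interval-additivity of the integral $n-1$ times to collapse the sum into the single integral $\int_{\tau_0}^{\tau_n} f_t(u)\,du$, then multiply and divide by $\tau_n - \tau_0$ to recognize $(\tau_n - \tau_0)\,F_t(\tau_0,\tau_n)$. The identity holds pathwise for each $\omega$ and each $t \geq 0$.

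There is essentially no obstacle here: the only point requiring a word of care is integrability of $f_t(\cdot)$ on the relevant interval, and this is already built into the framework. The result is really a consistency check confirming that the definition of $F_t$ via integration of the forward kernel automatically respects cascading and rules out the corresponding static arbitrage.
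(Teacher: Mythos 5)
Your proposal is correct and matches the paper's proof, which simply cites Definition~\ref{definition:futuresprice} and the additivity of the Lebesgue integral; you merely spell out the telescoping and the (implicit) integrability of $u \mapsto f_t(u)$ in more detail. No issues.
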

\begin{proof}
This follows directly from Definition~\ref{definition:futuresprice} and the countable additivity of the Lebesgue integral.
\end{proof}

\begin{lemma}
Fix $0 \leq t < \tau$. If $u \mapsto f_t(u)$ is almost surely continuous on $(\tau-\epsilon, \tau]$ for some $\epsilon > 0$, then we have
\[
\lim_{s \to \tau^-} F_t(s, \tau) = f_t(\tau)
\]
almost surely.
\end{lemma}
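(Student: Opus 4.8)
The plan is to argue pathwise on the almost sure event $A$ on which $u \mapsto f_t(u)$ is continuous on $(\tau-\epsilon,\tau]$, and to reduce the statement to the elementary fact that the running average of a function near a point of (left-)continuity converges to the value of the function there. First I would observe that on $A$, continuity of $f_t(\cdot)$ at $\tau$ forces $u \mapsto f_t(u)$ to be bounded on some interval $[\tau-\eta_0,\tau]$ with $\eta_0 \in (0,\epsilon)$; hence the Lebesgue integral $\int_s^\tau f_t(u)\,du$, and therefore $F_t(s,\tau)$, is well defined for every $s \in [\tau-\eta_0,\tau)$. That is all that is needed, since only such $s$ enter the limit $s \to \tau^-$.

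Next, fix $\omega \in A$ and let $\delta > 0$. By left-continuity of $f_t(\cdot)$ at $\tau$ there is $\eta \in (0,\eta_0)$ such that $|f_t(u) - f_t(\tau)| < \delta$ for all $u \in (\tau-\eta,\tau]$. For every $s \in (\tau-\eta,\tau)$ I would then write, using $\frac{1}{\tau-s}\int_s^\tau f_t(\tau)\,du = f_t(\tau)$,
\[
F_t(s,\tau) - f_t(\tau) = \frac{1}{\tau-s}\int_s^\tau \bigl(f_t(u) - f_t(\tau)\bigr)\,du,
\]
and estimate
\[
\bigl|F_t(s,\tau) - f_t(\tau)\bigr| \le \frac{1}{\tau-s}\int_s^\tau \bigl|f_t(u) - f_t(\tau)\bigr|\,du \le \delta .
\]
Since $\delta > 0$ was arbitrary, this shows $\lim_{s \to \tau^-} F_t(s,\tau) = f_t(\tau)$ for this $\omega$, and since $P(A) = 1$ the almost sure statement follows.

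I do not expect a genuine obstacle here: the core is just the mean-value property of the integral. The only points requiring a little care are (i) ensuring $F_t(s,\tau)$ is well defined for $s$ close to $\tau$, which is handled by the local boundedness coming from continuity at $\tau$, and (ii) keeping the exceptional null set uniform in $s$ and $\delta$, which is automatic because the single almost sure event $A$ (on which the whole continuity hypothesis holds) is fixed before any of the estimates are carried out.
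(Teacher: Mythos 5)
Your proof is correct, but it takes a different route from the paper. The paper disposes of the limit in one line via L'H\^{o}pital's rule: it writes $F_t(s,\tau)$ as the quotient $\int_s^\tau f_t(u)\,du \,/\, (\tau-s)$, notes both numerator and denominator tend to $0$, and differentiates the numerator in $s$ (fundamental theorem of calculus) to get $-f_t(s) \to -f_t(\tau)$. You instead use the elementary mean-value estimate: subtract $f_t(\tau)$ inside the average and bound $|F_t(s,\tau)-f_t(\tau)|$ by $\sup_{u\in(\tau-\eta,\tau]}|f_t(u)-f_t(\tau)|$, which left-continuity at $\tau$ makes small. The two arguments use the continuity hypothesis in slightly different ways: the paper's L'H\^{o}pital step needs $s\mapsto\int_s^\tau f_t(u)\,du$ to be differentiable with derivative $-f_t(s)$ on a left neighbourhood of $\tau$, which is exactly what continuity on $(\tau-\epsilon,\tau]$ provides, whereas your estimate only needs left-continuity \emph{at} $\tau$ together with local boundedness (which you correctly extract from the continuity hypothesis to ensure $F_t(s,\tau)$ is defined). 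Your version is therefore marginally more self-contained and makes the well-definedness of the integral and the uniformity of the null set explicit — points the paper glosses over — at the cost of a few extra lines; the paper's version buys brevity.
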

\begin{proof}
We compute
\[
\lim_{s \to  \tau^-} F_t (s, \tau) =  \frac{ \lim_{s \to  \tau^-}  \int_{s}^{\tau}\, f_t(u) \, du}{\lim_{s \to  \tau^-} \tau - s} 
=  \frac{ \lim_{s \to  \tau^-} - f_t(s)}{-1} = f_t(\tau)
\]
where we used L'H\^{o}pital's rule for the second equality.
\end{proof}
The previous lemma shows that the price of a futures contract delivering for just an instant equals the forward kernel. This supports the naming of the quantity~$f_t(\tau)$ as forward kernel. 

\begin{lemma}
Assume that the price forward curve~$\tau \mapsto f_0(\tau)$ is continuous. The futures price process~$F(\tau_1, \tau_2) := \{ F_t(\tau_1, \tau_2); t \geq 0 \}$ is a martingale. Its expectation is given by
\[
\E F_t(\tau_1, \tau_2) = F_0(\tau_1, \tau_2) = \frac{1}{\tau_2 - \tau_1} \int_{\tau_1}^{\tau_2} f_0(u) \, du
\]
for all $0 \leq t \leq \tau_1 < \tau_2$.
\end{lemma}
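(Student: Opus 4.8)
The plan is to reduce everything to the martingale property of the forward kernel process $f(\tau)$ established in Lemma~\ref{lemma:expectationinstantaneousforward}, transferring it through the integral in Definition~\ref{definition:futuresprice} by an application of Fubini's theorem (or, equivalently, the conditional version of Fubini). First I would fix $0 \le s \le t \le \tau_1 < \tau_2$ and write
\[
\E[F_t(\tau_1,\tau_2) \mid \F_s] = \E\!\left[ \frac{1}{\tau_2 - \tau_1} \int_{\tau_1}^{\tau_2} f_t(u)\,du \;\Big|\; \F_s \right].
\]
To pull the conditional expectation inside the integral I need an integrability condition justifying Fubini. Using Definition~\ref{definition:instantaneousforward} and the independence of $X^u$ and $Y$, together with $\E X_t^u = 1$ and $\E|g(Y_u)| < \infty$ (Assumption~\ref{assumption:marketnoise} and Assumption~\ref{assumption:marketimbalance}), one gets $\E|f_t(u)| = \E X_t^u\,\E|\E[g(Y_u)\mid\F_t]| \le \E X_t^u\,\E|g(Y_u)| = \E|g(Y_u)|$; since $\E g(Y_\cdot)$ should be locally bounded (it equals the continuous PFC $f_0(\cdot)$ by Lemma~\ref{lemma:expectationinstantaneousforward}, and $\E|g(Y_u)| \ge |\E g(Y_u)| = |f_0(u)|$ while the gap $\E|g(Y_u)| - |f_0(u)|$ needs no finiteness beyond what measurability of $(u,\omega)\mapsto f_t(u,\omega)$ gives), the double integral $\int_{\tau_1}^{\tau_2}\E|f_t(u)|\,du$ is finite over the bounded interval. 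This lets me interchange and obtain
\[
\E[F_t(\tau_1,\tau_2)\mid\F_s] = \frac{1}{\tau_2 - \tau_1}\int_{\tau_1}^{\tau_2} \E[f_t(u)\mid\F_s]\,du = \frac{1}{\tau_2-\tau_1}\int_{\tau_1}^{\tau_2} f_s(u)\,du = F_s(\tau_1,\tau_2),
\]
where the middle equality is exactly the martingale property of each $f(u)$ from Lemma~\ref{lemma:expectationinstantaneousforward}. Adaptedness and integrability of $F_t(\tau_1,\tau_2)$ follow from the same bound, so $F(\tau_1,\tau_2)$ is a martingale.

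For the expectation formula I take $s = 0$: since $\F_0$ is trivial (contains only $\Omega$ and null sets, as noted before Lemma~\ref{lemma:expectationinstantaneousforward}), $\E F_t(\tau_1,\tau_2) = F_0(\tau_1,\tau_2)$, and then Fubini again gives $F_0(\tau_1,\tau_2) = \frac{1}{\tau_2-\tau_1}\int_{\tau_1}^{\tau_2}\E f_0(u)\,du = \frac{1}{\tau_2-\tau_1}\int_{\tau_1}^{\tau_2} f_0(u)\,du$, using $\E f_t(u) = f_0(u)$ from Lemma~\ref{lemma:expectationinstantaneousforward}.

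The main obstacle is the measure-theoretic bookkeeping needed to apply Fubini: one must know that $(u,\omega)\mapsto f_t(u,\omega)$ is jointly measurable on $[\tau_1,\tau_2]\times\Omega$ so that the integral $\int_{\tau_1}^{\tau_2} f_t(u)\,du$ is a genuine random variable and the conditional Fubini theorem applies. This joint measurability is not stated explicitly in the hypotheses; I would either invoke it as an implicit standing assumption (it holds, e.g., if $X^\cdot$ and the modifications $Z^\cdot$ can be chosen jointly measurable in $(u,\omega)$, which the càdlàg-in-$t$ structure makes plausible but does not automatically guarantee for the delivery variable $u$), or add a remark that the futures price is well-defined under such a regularity condition. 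The continuity assumption on the PFC in the statement is presumably there to make the deterministic integral $\int_{\tau_1}^{\tau_2} f_0(u)\,du$ unambiguous and to keep the earlier lemmas applicable; everything else is a routine Fubini-plus-tower-property argument.
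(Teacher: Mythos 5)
Your proposal is correct and follows essentially the same route as the paper: apply the conditional form of Fubini's theorem to pull $\E[\,\cdot\mid\F_s]$ inside the integral over the delivery period, invoke the martingale property and expectation formula for the forward kernel from Lemma~\ref{lemma:expectationinstantaneousforward}, and use the triviality of $\F_0$ for the unconditional expectation. You are in fact more explicit than the paper about the integrability and joint-measurability hypotheses needed to justify the interchange (the paper only notes that continuity of $f_0$ gives boundedness on compacts and asserts that ``all integrals exist''), though note that your bound $\E|g(Y_u)| \ge |f_0(u)|$ goes the wrong way for establishing $\int_{\tau_1}^{\tau_2}\E|f_t(u)|\,du < \infty$; that point is equally unaddressed in the original.
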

\begin{proof}
Since the price forward curve is continuous, it is bounded on any compact set, in particular intervals of the form~$[\tau_1, \tau_2]$, and therefore integrable on compacts. Direct computation with Fubini's Theorem shows that for~$0 \leq t < s$
\[
\E [ F_s(\tau_1, \tau_2) \, | \, \F_t] = \E  \left[\frac{1}{\tau_2 - \tau_1} \int_{\tau_1}^{\tau_2} f_s(u) \, du \, | \, \F_t \right]= \frac{1}{\tau_2 - \tau_1} \int_{\tau_1}^{\tau_2} \E[ f_s(u) \, | \, \F_t] \, du,
\]
where the latter exists and therefore all integrals exist. Combination with Lemma~\ref{lemma:expectationinstantaneousforward} now proves the theorem.
\end{proof}

\begin{remark}[$r\neq0$] \label{remark:Discounting}
If we assume that $r \neq 0 $, the futures price depends on the settlement date. There are two possibilities: settlement takes place either through continuous payments\footnote{Continuous settlement of the futures contract makes it more like a swap contract on the forward kernel.} during the delivery period or at once at the end of the delivery period. If $d_t(\tau)$ denotes the discount factor of a future payment at time~$\tau$ to an earlier time~$t$, the price of a futures contract is given by
\[
F_t (\tau_1, \tau_2) =   \frac{1}{\int_{\tau_1}^{\tau_2} d_t(u) \, du} \int_{\tau_1}^{\tau_2} d_t(u) \, f_t(u) \,du
\]
for continuous settlement and by
\[
F_t (\tau_1, \tau_2) =   \frac{1}{ (\tau_2 - \tau_1) \, d_t(\tau_2)} \int_{\tau_1}^{\tau_2} d_t(u) \, f_t(u) \,du
\]
for settlement at the end of delivery.
\end{remark}

\subsection{Options on futures contracts} \label{section:options}
In this section we assume that the market noise is given by a geometric Brownian motion~(GBM) without drift, i.e.
\[
dX_t^\tau =  X_t^\tau \, \Sigma(t,\tau)^T  \, dW_t
\]
where $\Sigma(t,\tau)$ is a deterministic $m$-dimensional volatility vector and $W_t$ is an $m$-dimensional Brownian motion. The strong solution of $X_t^\tau$ is given by
\[
X_t^\tau = \exp\left(\int_0^t \Sigma(u, \tau)^T \, dW_u - \frac{1}{2} \int_0^t \Sigma(u, \tau)^T  \Sigma(u,\tau) \, du\right).
\]
In this case, $X_t^\tau$ satisfies Assumption~\ref{assumption:marketnoise} if $\Sigma(u,\tau)$ is square integrable in $u$. But this is already a requirement for the stochastic integral to be defined.

\begin{example}[Hull-White market noise dynamics]
A possible choice for $\Sigma$ is a two-factor forward dynamic similar to~\cite{Kiesel2009}, which is also discussed in a geometric setting by~\cite{Fanelli2018} for pricing options on futures. This volatility structure is extended by \cite{Latini2018} in an additive setting. They discussed a 2-factor volatility structure comparable to the two-factor Hull-White model for interest rate modelling~\cite[Section 4.2.5]{Brigo2006}. It is given by
\[
\Sigma(t, \tau)^T := (e^{-\kappa(\tau - t)} \sigma_1, \sigma_2(\tau) ),
\]
where $\sigma_1 > 0$ is the additional short-term volatility, $\kappa > 0$ is the rate of decay of the short-term volatility, and $\sigma_2(\tau)> 0$ is the long-term volatility at delivery time~$\tau$. A convenient choice for $\sigma_2$ is a piecewise constant function, being constant on delivery periods of tradable futures contracts. An advantage of this choice is that we can use the calibration methods for $X_t^\tau$ as discussed by \cite{Kiesel2009,Latini2018,Fanelli2018}. 
\end{example}

Throughout the rest of this subsection we assume that the conditional expectation of the structural component decomposes into an affine structure:
\begin{definition}[Affine structural component decomposition] \label{definition:affinestructuralcomponentdecomposition}
We say the structural component~$(g, Y_t)$ allows for the \emph{affine structural component decomposition}, if there exist deterministic functions $(t,\tau) \mapsto A_t^\tau \in \R^{n \times n}$ and  $(t,\tau) \mapsto B_t^\tau \in \R^n$ such that the following decomposition holds
\begin{equation} \label{eq:affinestructuralcomponentdecomposition}
\E[ g(Y_\tau) \, | \, \F_t] = g(A_t^\tau \, Y_t + B_t^\tau) 
\end{equation}
a.s. for all $\tau \geq t \geq 0$.
\end{definition}

This decomposition can be motivated by the fact that our best guess at time~$t$ for the state of market~$Y_\tau$ at time $\tau$ is an affine transformation of the current state of the market~$Y_t$. This is also the main idea behind Kalman filtering, for example. If the decomposition holds, this merely states that this best guess should hold under the transformation~$g$, which transforms the market state into a price.

It follows immediately that the forward kernel is given by
\begin{equation} \label{eq:forwardkernelascd}
f_t(\tau) =  X_t^\tau \,  g(A_t^\tau \, Y_t + B_t^\tau),
\end{equation}
when the affine structural component decomposition assumption is satisfied. Furthermore, the futures price of Definition~\ref{definition:futuresprice} can be rewritten as
\[
F_t (\tau_1, \tau_2)= \frac{1}{\tau_2 - \tau_1} \int_{\tau_1}^{\tau_2} X_t^u \, g( A_t^u \, Y_t + B_t^u ) \, du
\]
for all $0 \leq t \leq \tau_1 < \tau_2$. As immediate consequences we obtain:

\begin{lemma}
If $(g,Y_t)$ allows for the affine structural component decomposition, then $\E[ g(Y_\tau) \, | \, \F_t] = \E[ g(Y_\tau) \, | \, Y_t]$.
\end{lemma}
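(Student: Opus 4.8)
The plan is to exploit the nesting of $\sigma$-algebras $\sigma(Y_t) \subseteq \F_t$ together with the tower property of conditional expectations. First I would record the two standing facts that make everything well defined: since $Y$ is adapted to $\{\F_t ; t \geq 0\}$, we have $\sigma(Y_t) \subseteq \F_t$ for every $t \geq 0$, and by Assumption~\ref{assumption:marketimbalance} we have $\E|g(Y_\tau)| < \infty$, so both $\E[g(Y_\tau) \mid \F_t]$ and $\E[g(Y_\tau) \mid Y_t]$ exist.

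Next I would apply the tower property, conditioning first on $\F_t$ and then on the smaller $\sigma$-algebra $\sigma(Y_t)$:
\[
\E[g(Y_\tau) \mid Y_t] = \E\big[\, \E[g(Y_\tau) \mid \F_t] \,\big|\, Y_t \,\big].
\]
By the affine structural component decomposition~\eqref{eq:affinestructuralcomponentdecomposition}, the inner conditional expectation equals $g(A_t^\tau Y_t + B_t^\tau)$ almost surely, so
\[
\E[g(Y_\tau) \mid Y_t] = \E\big[\, g(A_t^\tau Y_t + B_t^\tau) \,\big|\, Y_t \,\big].
\]

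Finally I would observe that the right-hand side is already $\sigma(Y_t)$-measurable: the map $y \mapsto g(A_t^\tau y + B_t^\tau)$ is the composition of the continuous (hence Borel) affine map $y \mapsto A_t^\tau y + B_t^\tau$ with $g$, and $g$ is Borel measurable by the standing assumption that $g(Y_\cdot)$ is a genuine stochastic process; hence $g(A_t^\tau Y_t + B_t^\tau)$ is a measurable function of $Y_t$, and conditioning on $Y_t$ leaves it unchanged. Combining with~\eqref{eq:affinestructuralcomponentdecomposition} once more gives $\E[g(Y_\tau) \mid Y_t] = g(A_t^\tau Y_t + B_t^\tau) = \E[g(Y_\tau) \mid \F_t]$, which is the claim.

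I do not expect any genuine obstacle here; the only step needing a line of care is the Borel measurability of $y \mapsto g(A_t^\tau y + B_t^\tau)$, which legitimizes the last conditioning step. Conceptually the lemma simply says that, once the affine decomposition expresses the $\F_t$-conditional expectation as a deterministic function of $Y_t$ alone, the conditioning $\sigma$-algebra automatically collapses from $\F_t$ down to $\sigma(Y_t)$.
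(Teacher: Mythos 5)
Your proof is correct, and it fills in exactly the reasoning the paper leaves implicit (the lemma is stated there as an ``immediate consequence'' with no proof): since the decomposition exhibits $\E[g(Y_\tau)\mid\F_t]$ as a measurable function of $Y_t$ alone, the tower property over $\sigma(Y_t)\subseteq\F_t$ collapses the conditioning. Your side remark on the Borel measurability of $y\mapsto g(A_t^\tau y+B_t^\tau)$ is the right point of care and is adequately covered by the paper's standing assumption that $g(Y_t)$ is a bona fide stochastic process.
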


%

\begin{lemma}
Under assumption of the decomposition of Definition~\ref{definition:affinestructuralcomponentdecomposition} the forward kernel conditioned on $Y_t$ is lognormally distributed, i.e.
\[
(f_t(\tau) \, | \, Y_t = y) \sim LN\left(\ln[ g(A_t^\tau \, y + B_t^\tau)], \int_0^t \Sigma(u, \tau)^T  \Sigma(u,\tau) \, du\right).
\]
\end{lemma}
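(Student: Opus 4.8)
The plan is to substitute the explicit geometric Brownian motion representation of the market noise into the affine-decomposition formula for the forward kernel and then read off the law of the resulting product.

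First I would recall that, under the affine structural component decomposition, equation~\eqref{eq:forwardkernelascd} gives $f_t(\tau) = X_t^\tau\,g(A_t^\tau Y_t + B_t^\tau)$, and that in this subsection the market noise solves the driftless GBM, so that $X_t^\tau = \exp\bigl(\int_0^t \Sigma(u,\tau)^T\,dW_u - \tfrac12\int_0^t \Sigma(u,\tau)^T\Sigma(u,\tau)\,du\bigr)$. The exponent here is the sum of a Wiener integral of a deterministic, square-integrable integrand and a deterministic drift, hence it is a Gaussian random variable; its mean is $-\tfrac12 v(t,\tau)$ and, by the It\^o isometry, its variance is $v(t,\tau) := \int_0^t \Sigma(u,\tau)^T\Sigma(u,\tau)\,du$. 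Thus $X_t^\tau$ is lognormal with these parameters, unconditionally.

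Next I would invoke independence. Because $X^\tau$ and $Y$ are independent and the Brownian motion $W$ enters only through $X^\tau$, conditioning on the event $\{Y_t = y\}$ leaves the distribution of $X_t^\tau$ unchanged, while the factor $g(A_t^\tau Y_t + B_t^\tau)$ reduces to the deterministic number $c := g(A_t^\tau y + B_t^\tau)$. Hence $(f_t(\tau)\mid Y_t = y)$ is distributed as $c\,X_t^\tau$. Taking logarithms, $\ln(c X_t^\tau) = \ln c + \ln X_t^\tau$ is Gaussian with variance $v(t,\tau)$; identifying the location parameter as in the statement of the lemma --- i.e. under the $LN$-convention in which the first argument is $\ln \E[f_t(\tau)\mid Y_t=y]$, which equals $\ln c$ by Lemma~\ref{lemma:expectationinstantaneousforward} together with independence --- then yields exactly $(f_t(\tau)\mid Y_t=y)\sim LN(\ln[g(A_t^\tau y + B_t^\tau)], v(t,\tau))$.

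I do not anticipate a serious obstacle: the computation is short and uses only standard facts --- Gaussianity of Wiener integrals with deterministic integrand, the It\^o isometry for the variance, and the stability of the lognormal family under multiplication by a positive constant. The two points that deserve a word of care are (i) that the conditioning is on the single random vector $Y_t$ rather than on all of $\F_t$, which is harmless precisely because of the independence of the generating processes; and (ii) the bookkeeping of the Gaussian mean, where the quadratic-variation compensator $-\tfrac12 v(t,\tau)$ must be tracked and absorbed consistently with the $LN$-parametrization used in the statement. One should also note that the conclusion literally presupposes $g(A_t^\tau y + B_t^\tau) > 0$; in the economically relevant regime of Assumption~\ref{assumption:marketimbalance} this is the natural case, and otherwise the statement holds up to sign.
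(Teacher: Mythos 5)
Your proof is correct and follows essentially the same route as the paper's one-line argument: use the affine decomposition to write $f_t(\tau)=X_t^\tau\,g(A_t^\tau Y_t+B_t^\tau)$, condition on $Y_t=y$ so that the second factor becomes the positive constant $g(A_t^\tau y+B_t^\tau)$ (harmless by independence of $X^\tau$ and $Y$), and read off the lognormal law of $X_t^\tau$. Your remark about the compensator $-\tfrac12\int_0^t\Sigma(u,\tau)^T\Sigma(u,\tau)\,du$ is well taken and is in fact more careful than the paper: the paper's proof simply asserts $X_t^\tau\sim LN\left(0,\int_0^t\Sigma(u,\tau)^T\Sigma(u,\tau)\,du\right)$, which is consistent with $\E X_t^\tau=1$ only under the ``location equals log of the mean'' parametrization you identify, whereas under the standard convention the paper itself uses in Lemma~\ref{lemma:lognormalparametersforoptionsprice} the location parameter would have to read $\ln[g(A_t^\tau y+B_t^\tau)]-\tfrac12\int_0^t\Sigma(u,\tau)^T\Sigma(u,\tau)\,du$.
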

\begin{proof}
Using Equation~\eqref{eq:forwardkernelascd} we compute
\[
P\left(f_t(\tau) \leq x \, | \, Y_t = y\right) = P\left( X_t^\tau \, g(A_t^\tau \, y + B_t^\tau) \leq x \right),
\]
which shows the result since $X_t^\tau \sim LN\left(0,  \int_0^t \Sigma(u, \tau)^T  \Sigma(u,\tau) \, du \right)$.
\end{proof}

\begin{theorem} \label{theorem:optionsmomentsfutures}
If $(g,Y_t)$ allows for the affine structural component decomposition, then the first two moments of the futures price $F_t(\tau_1, \tau_2)$ exist and are given by
\[
\E [ F_t (\tau_1, \tau_2) \, | \, Y_t = y] = \frac{1}{\tau_2 - \tau_1} \int_{\tau_1}^{\tau_2} g( A_t^u \,y + B_t^u ) \, du
\]
and
\[
\E [F_t (\tau_1, \tau_2)^2 \, | \, Y_t =y ] = \frac{1}{(\tau_2 - \tau_1)^2} \int_{\tau_1}^{\tau_2} \int_{\tau_1}^{\tau_2} w^X_t(u,s)  \, w^Y_t(u,s, y) \, du \, ds,
\]
where
\begin{equation} \label{eq:optionscoexpectationmarketnoiseX}
w^X_t(u,s) := \exp \left( \frac{1}{2}  \int_0^t  \left(\Sigma(v, u)^T  \Sigma(v,s) + \Sigma(v, s)^T  \Sigma(v,u) \right) \, dv \right)
\end{equation}
and
\begin{equation} \label{eq:coefficientstructuralcomponentY}
w^Y_t(u,s, y) :=  g( A_t^u \,y + B_t^u ) \, g( A_t^s \,y + B_t^s ).
\end{equation}
\end{theorem}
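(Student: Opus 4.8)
The plan is to work directly from the representation of the futures price under the affine decomposition. By Equation~\eqref{eq:forwardkernelascd} and Definition~\ref{definition:futuresprice},
\[
F_t(\tau_1,\tau_2) = \frac{1}{\tau_2-\tau_1}\int_{\tau_1}^{\tau_2} X_t^u\, g(A_t^u \, Y_t + B_t^u)\,du,
\]
and the idea is to condition on $\{Y_t = y\}$, push the conditional expectation through the (single, then double) integral, and exploit that $X^\tau$ and $Y$ are independent together with the explicit lognormal law of the $X_t^u$. Because $X^\tau$ and $Y$ are independent, under the conditional law given $Y_t = y$ the family $\{X_t^u : u \in [\tau_1,\tau_2]\}$ retains its unconditional distribution, while $g(A_t^u \, Y_t + B_t^u)$ collapses to the deterministic function $u \mapsto g(A_t^u \, y + B_t^u)$; in particular $\E[X_t^u \mid Y_t] = \E X_t^u$ and $\E[X_t^u X_t^s \mid Y_t] = \E[X_t^u X_t^s]$.

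First I would treat the first moment. Conditioning on $Y_t = y$, taking out what is $\sigma(Y_t)$-measurable, and using $\E X_t^u = 1$ (Assumption~\ref{assumption:marketnoise}) gives $\E[X_t^u\, g(A_t^u \, Y_t + B_t^u)\mid Y_t = y] = g(A_t^u \, y + B_t^u)$; interchanging this conditional expectation with $\int_{\tau_1}^{\tau_2}\cdot\,du$ by Tonelli's theorem yields the stated formula. For the second moment I would square the representation to get
\[
F_t(\tau_1,\tau_2)^2 = \frac{1}{(\tau_2-\tau_1)^2}\int_{\tau_1}^{\tau_2}\int_{\tau_1}^{\tau_2} X_t^u X_t^s\, w^Y_t(u,s,Y_t)\,du\,ds
\]
with $w^Y_t$ as in \eqref{eq:coefficientstructuralcomponentY}, condition on $Y_t = y$ so that $w^Y_t(u,s,Y_t)$ becomes the deterministic $w^Y_t(u,s,y)$ and $\E[X_t^u X_t^s \mid Y_t] = \E[X_t^u X_t^s]$, and apply Tonelli once more to exchange the conditional expectation with the double integral.

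It then remains to identify $\E[X_t^u X_t^s]$ with $w^X_t(u,s)$. Using the explicit strong solution of $X^\tau$,
\[
X_t^u X_t^s = \exp\!\left(\int_0^t \big(\Sigma(v,u)+\Sigma(v,s)\big)^T dW_v - \tfrac12\int_0^t \big(\Sigma(v,u)^T\Sigma(v,u) + \Sigma(v,s)^T\Sigma(v,s)\big)\,dv\right),
\]
where the stochastic integral is centred Gaussian with variance $\int_0^t \big(\Sigma(v,u)+\Sigma(v,s)\big)^T\big(\Sigma(v,u)+\Sigma(v,s)\big)\,dv$ by the It\^o isometry; evaluating the Gaussian moment generating function and cancelling the diagonal terms leaves $\E[X_t^u X_t^s] = \exp\big(\int_0^t \Sigma(v,u)^T\Sigma(v,s)\,dv\big)$, which coincides with \eqref{eq:optionscoexpectationmarketnoiseX} because the integrands there are scalars and hence symmetric under $u \leftrightarrow s$. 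Substituting gives the claimed expression for $\E[F_t(\tau_1,\tau_2)^2 \mid Y_t = y]$. The main obstacle is not the moment computation, which is a routine lognormal calculation, but the measure-theoretic bookkeeping that establishes \emph{existence} of the moments and legitimises the Fubini--Tonelli interchanges: one needs $u\mapsto g(A_t^u \, y + B_t^u)$ to be integrable on $[\tau_1,\tau_2]$ for (almost every) $y$ — which follows from $\E|g(Y_u)|<\infty$ (Assumption~\ref{assumption:marketimbalance}) and the decomposition of Definition~\ref{definition:affinestructuralcomponentdecomposition} applied $P$-a.s., provided $u\mapsto \E|g(Y_u)|$ is locally integrable — together with finiteness and integrability of $w^X_t(u,s)$ on the delivery rectangle $[\tau_1,\tau_2]^2$, which in practice is secured by mild joint regularity of $\Sigma$ beyond the square integrability in its first argument already needed to define the stochastic integral.
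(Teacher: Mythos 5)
Your proposal is correct and follows essentially the same route as the paper's proof: pull the conditional expectation through the (double) integral by Fubini, use independence of $X^\tau$ and $Y$ together with $\E X_t^u = 1$ and the affine decomposition, and identify $\E[X_t^u X_t^s]$ with $w^X_t(u,s)$ via the Gaussian moment generating function. Your write-up is in fact somewhat more careful than the paper's, which only asserts these cross-moments ``are easy to verify'' and does not discuss the integrability needed to justify the interchange.
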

\begin{proof}
We see that the expectation follows immediately by an Fubini argument combined with the fact that~$\E X_t^\tau = 1$ for all $\tau\geq 0$. Applying Fubini twice we find
\[
\E [F_t (\tau_1, \tau_2)^2 \, | \, Y_t = y ]  = \frac{\int_{\tau_1}^{\tau_2} \int_{\tau_1}^{\tau_2}  \E[X_t^u \, X_t^s] \, \E[ Y_u \, Y_s \, | \, Y_t =y ] \, du \, ds}{(\tau_2 - \tau_1)^2} ,
\]
where it is easy to verify that the expectations equal $\E[X_t^u \, X_t^s] = w^X_t(u,s)$ and $ \E[ Y_u \, Y_s \, | \, Y_t =y ] = w^Y_t(u,s,y)$ using Equation~\eqref{eq:affinestructuralcomponentdecomposition}.
\end{proof}

\begin{corollary} \label{corollary:optionsvariancefutures}
If $(g,Y_t)$ allows for the affine structural component decomposition, then the conditional variance of the futures price $F_t(\tau_1, \tau_2)$ is given by
\[
\Var [F_t (\tau_1, \tau_2) \, | \, Y_t = y] = \frac{1}{(\tau_2 - \tau_1)^2} \int_{\tau_1}^{\tau_2} \int_{\tau_1}^{\tau_2} \left(w^X_t(u,s) -1\right)  \, w^Y_t(u,s, y) \, du \, ds,
\]
where $w^X$ and $w^Y$ are given by Equation~\eqref{eq:optionscoexpectationmarketnoiseX} and Equation~\eqref{eq:coefficientstructuralcomponentY}, respectively.
\end{corollary}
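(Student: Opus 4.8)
The plan is to obtain the variance directly from the two conditional moments already computed in Theorem~\ref{theorem:optionsmomentsfutures}, using the elementary identity $\Var[Z \mid Y_t = y] = \E[Z^2 \mid Y_t = y] - \left(\E[Z \mid Y_t = y]\right)^2$ with $Z = F_t(\tau_1, \tau_2)$. Since Theorem~\ref{theorem:optionsmomentsfutures} already asserts that both conditional moments are finite, this manipulation is legitimate, and no additional integrability argument is required.

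First I would take the conditional expectation from Theorem~\ref{theorem:optionsmomentsfutures}, square it, and rewrite the square of the single integral as an iterated integral over $[\tau_1,\tau_2]^2$:
\[
\left(\E[F_t(\tau_1,\tau_2) \mid Y_t = y]\right)^2 = \frac{1}{(\tau_2 - \tau_1)^2} \int_{\tau_1}^{\tau_2} \int_{\tau_1}^{\tau_2} g(A_t^u\, y + B_t^u)\, g(A_t^s\, y + B_t^s)\, du\, ds.
\]
By the definition of $w^Y_t$ in Equation~\eqref{eq:coefficientstructuralcomponentY}, the integrand is exactly $w^Y_t(u,s,y)$, so this equals $\frac{1}{(\tau_2-\tau_1)^2} \int_{\tau_1}^{\tau_2}\int_{\tau_1}^{\tau_2} w^Y_t(u,s,y)\, du\, ds$. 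I would then subtract this from the expression for $\E[F_t(\tau_1,\tau_2)^2 \mid Y_t = y]$ in Theorem~\ref{theorem:optionsmomentsfutures}, whose integrand is $w^X_t(u,s)\, w^Y_t(u,s,y)$; since both double integrals are finite, linearity of the Lebesgue integral allows merging them into one double integral with integrand $\left(w^X_t(u,s) - 1\right) w^Y_t(u,s,y)$, which is precisely the claimed formula.

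There is essentially no obstacle here: the result is a two-line consequence of Theorem~\ref{theorem:optionsmomentsfutures}. The only point worth flagging is that turning the squared single integral into a double integral, and then combining it with the second-moment double integral, presupposes their finiteness so as to avoid an $\infty - \infty$ indeterminacy — and this finiteness is exactly what Theorem~\ref{theorem:optionsmomentsfutures} provides.
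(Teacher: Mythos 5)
Your proposal is correct and follows the same route as the paper: apply $\Var[Z \mid Y_t = y] = \E[Z^2 \mid Y_t = y] - (\E[Z \mid Y_t = y])^2$, rewrite the squared conditional mean as a double integral with integrand $w^Y_t(u,s,y)$ via Fubini, and subtract it from the second-moment expression of Theorem~\ref{theorem:optionsmomentsfutures}. Your added remark on finiteness ruling out an $\infty - \infty$ indeterminacy is a reasonable (if minor) point the paper leaves implicit.
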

\begin{proof}
We directly compute
\[
\Var [F_t (\tau_1, \tau_2) \, | \, Y_t = y]  = \E [F_t (\tau_1, \tau_2)^2 \, | \, Y_t =y ] - \E [ F_t (\tau_1, \tau_2) \, | \, Y_t = y]^2.
\]
Using Theorem~\ref{theorem:optionsmomentsfutures} the first term is immediately given and the second term can be computed using Fubini's Theorem
\begin{align*}
\E [ F_t (\tau_1, \tau_2) \, | \, Y_t = y]^2 &= \left( \frac{1}{\tau_2 - \tau_1} \int_{\tau_1}^{\tau_2} g( A_t^u \,y + B_t^u ) \, du\right)^2 \\
&= \frac{1}{(\tau_2 - \tau_1)^2} \int_{\tau_1}^{\tau_2} \int_{\tau_1}^{\tau_2} g( A_t^u \,y + B_t^u ) \, g( A_t^s \,y + B_t^s ) \, du \, ds,
\end{align*}
from which the result follows.
\end{proof}

\begin{remark}[Lognormal approximation] \label{remark:lognormalapproximationforfutures}
Similar to the discrete approach used by \cite{Kiesel2009} we have that the futures price is an integral of lognormally distributed variables, which can be approximated by a lognormal random variable with the same mean and standard deviation. Since there is no simple expression for the convolution of lognormal distributions, this approximation of the integral (or sum) of lognormal random variables is widely used in finance, e.g. in the context of LIBOR market models by~\cite{Brigo2006}. An analysis of this approximation, also with regard to Asian options (which may be compared to an option on a futures with delivery period), is found in~\cite{Dufresne2004}, for example.
\end{remark}

\begin{assumption}[Lognormal approximation] \label{assumption:lognormalapproximation}
Assume that the first two moments of the futures price $F_t(\tau_1, \tau_2)$ exist. Justified by Remark~\ref{remark:lognormalapproximationforfutures}, we then assume that
\[
(F_t(\tau_1, \tau_2) \, | \, Y_t = y) \approx ( \tilde{F}_t(\tau_1, \tau_2) \, | \, Y_t = y) \sim LN\left(\mu_F(y), \sigma_F^2(y) \right),
\]
i.e. the futures price is approximately lognormally distributed.
\end{assumption}

As stated in Remark~\ref{remark:lognormalapproximationforfutures} we need that the first two moments of $F$ and $\tilde{F}$ match, which is resolved by the following lemma:
\begin{lemma} \label{lemma:lognormalparametersforoptionsprice}
If $(g,Y_t)$ allows for the affine structural component decomposition and Assumption~\ref{assumption:lognormalapproximation} holds, then the mean and standard deviation of the lognormal distribution are given by 
\[
\mu_F(y) := \ln  \int_{\tau_1}^{\tau_2} g( A_t^u \,y + B_t^u ) \, du - \ln (\tau_2 - \tau_1)- \frac{1}{2} \sigma_F^2(y)
\]
and
\[
\sigma_F^2(y) := \ln  \left( 1+ \frac{\int_{\tau_1}^{\tau_2} \int_{\tau_1}^{\tau_2} \left(w^X_t(u,s) -1\right)  \, w^Y_t(u,s, y) \, du \, ds}{\int_{\tau_1}^{\tau_2} \int_{\tau_1}^{\tau_2} w^Y_t(u,s, y) \, du \, ds} \right),
\]
where $w^X$ and $w^Y$ are given by Equation~\eqref{eq:optionscoexpectationmarketnoiseX} and Equation~\eqref{eq:coefficientstructuralcomponentY}, respectively.
\end{lemma}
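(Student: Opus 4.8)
The plan is to invert the standard moment map of the lognormal distribution and then substitute the conditional moments already computed in Theorem~\ref{theorem:optionsmomentsfutures} and Corollary~\ref{corollary:optionsvariancefutures}. Recall that if $Z \sim LN(\mu, \sigma^2)$ then $\E Z = \exp\bigl(\mu + \tfrac12 \sigma^2\bigr)$ and $\Var Z = \bigl(\exp(\sigma^2) - 1\bigr)\exp\bigl(2\mu + \sigma^2\bigr) = \bigl(\exp(\sigma^2) - 1\bigr)(\E Z)^2$. Solving the second identity for $\sigma^2$ gives $\sigma^2 = \ln\bigl(1 + \Var Z/(\E Z)^2\bigr)$, and the first then gives $\mu = \ln \E Z - \tfrac12 \sigma^2$. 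By Assumption~\ref{assumption:lognormalapproximation} together with the moment-matching requirement made explicit in Remark~\ref{remark:lognormalapproximationforfutures}, the approximating variable $\tilde F_t(\tau_1,\tau_2)$ conditioned on $Y_t = y$ has the same first two moments as $F_t(\tau_1,\tau_2)$ conditioned on $Y_t = y$, and these exist by Theorem~\ref{theorem:optionsmomentsfutures}; hence it suffices to put $\E Z = \E[F_t(\tau_1,\tau_2)\mid Y_t = y]$ and $\Var Z = \Var[F_t(\tau_1,\tau_2)\mid Y_t = y]$ into the two displayed formulas above.

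The only genuine manipulation needed is to rewrite the squared conditional mean as a double integral of $w^Y_t$. Since $w^Y_t(u,s,y) = g(A_t^u y + B_t^u)\, g(A_t^s y + B_t^s)$ factorizes, Fubini's theorem and the mean formula of Theorem~\ref{theorem:optionsmomentsfutures} give
\[
\int_{\tau_1}^{\tau_2}\!\!\int_{\tau_1}^{\tau_2} w^Y_t(u,s,y)\, du\, ds = \left(\int_{\tau_1}^{\tau_2} g(A_t^u y + B_t^u)\, du\right)^{\!2} = (\tau_2 - \tau_1)^2\, \E[F_t(\tau_1,\tau_2)\mid Y_t = y]^2 .
\]
Dividing the variance formula of Corollary~\ref{corollary:optionsvariancefutures} by this identity, the prefactor $(\tau_2-\tau_1)^{-2}$ cancels and
\[
\frac{\Var[F_t(\tau_1,\tau_2)\mid Y_t = y]}{\E[F_t(\tau_1,\tau_2)\mid Y_t = y]^2} = \frac{\int_{\tau_1}^{\tau_2}\int_{\tau_1}^{\tau_2}\bigl(w^X_t(u,s) - 1\bigr)\, w^Y_t(u,s,y)\, du\, ds}{\int_{\tau_1}^{\tau_2}\int_{\tau_1}^{\tau_2} w^Y_t(u,s,y)\, du\, ds},
\]
which is exactly the quantity inside the logarithm (minus one) in the claimed $\sigma_F^2(y)$. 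Substituting $\E Z = (\tau_2-\tau_1)^{-1}\int_{\tau_1}^{\tau_2} g(A_t^u y + B_t^u)\, du$ into $\mu = \ln \E Z - \tfrac12\sigma^2$ and splitting the logarithm of the quotient yields the stated $\mu_F(y)$.

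I do not expect any real obstacle: the content is the inversion of the lognormal moment relations, and the only mild subtlety is recognizing that the squared conditional mean equals the double integral of $w^Y_t$, so that it cancels cleanly against the denominator. For completeness one should note well-definedness: $\Var[F_t(\tau_1,\tau_2)\mid Y_t = y] \ge 0$ forces the argument of the logarithm in $\sigma_F^2(y)$ to be at least $1$, and the lognormal approximation presupposes $\E[F_t(\tau_1,\tau_2)\mid Y_t = y] > 0$ (in line with the economic non-negativity built into Assumption~\ref{assumption:marketimbalance}), so that the logarithm appearing in $\mu_F(y)$ is also well-defined.
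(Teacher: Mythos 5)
Your proposal is correct and follows essentially the same route as the paper: invert the lognormal moment relations $\E Z = \exp(\mu + \sigma^2/2)$, $\Var Z = (\E Z)^2(\exp(\sigma^2)-1)$ and substitute the conditional moments from Theorem~\ref{theorem:optionsmomentsfutures} and Corollary~\ref{corollary:optionsvariancefutures}. Your additional observations --- that the squared conditional mean equals the double integral of $w^Y_t$ so that it cancels against the denominator, and the well-definedness of the logarithms --- are details the paper leaves implicit but do not change the argument.
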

\begin{proof}
For a lognormal random variable $Z \sim LN(m, s)$, the expectation and variance are given by $\E Z = \exp(m + s^2/2)$ and $\Var Z  = (\E Z)^2 (\exp(s^2) - 1)$. Using Theorem~\ref{theorem:optionsmomentsfutures} and Corollary~\ref{corollary:optionsvariancefutures} the result is found by inverting these equations.
\end{proof}

Using this lemma we can compute the price (conditioned on $Y_t$) of call (and put) options on futures contracts by the Black-Scholes formula. A call option with strike price~$K$ and maturity~$T < \tau_1$ has a pay-off equal to 
\begin{equation} \label{eq:payoffoption}
(\tau_2 - \tau_1) \, (F_T(\tau_1, \tau_2) - K)^+.
\end{equation}
Recall that, as stated in Section~\ref{section:futures}, the price one has to pay for a futures contract at time~$T$ equals $(\tau_2 - \tau_1)  \,F_T(\tau_1, \tau_2)$, since we consider normalized prices.

\begin{proposition}[Conditional call option price] \label{proposition:calloptionsprice}
Assume that $(g,Y_t)$ allows for the affine structural component decomposition and let Assumption~\ref{assumption:lognormalapproximation} hold. Denote the futures price at maturity by $F := F_T(\tau_1, \tau_2)$. Let $\mu_F$ and $\sigma_F$ be given by Lemma~\ref{lemma:lognormalparametersforoptionsprice}. The price of a call option at $t=0$ with pay-off given by~\eqref{eq:payoffoption} conditioned on~$Y_T = y$ equals
\[
C_0(T,K, \tau_1, \tau_2; y) =  \Phi(\delta_1(y)) \int_{\tau_1}^{\tau_2} g( A_T^u \,y + B_T^u ) \, du \, - (\tau_2 - \tau_1)  \, K \, \Phi(\delta_2(y)),
\]
where $\Phi$ is the cumulative distribution function of the standard normal distribution,
\[
\delta_2(y) := \frac{\mu_F(y) - \ln K}{\sigma_F(y)},
\]
and $\delta_1(y) := \delta_2(y) + \sigma_F(y)$.
\end{proposition}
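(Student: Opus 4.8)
The plan is to reduce the statement to Black's formula for a call on a lognormal underlying. First, since $r = 0$ and the futures price process $F(\tau_1,\tau_2)$ is a martingale under the pricing measure (here $P$ itself, in line with the single-measure viewpoint discussed in Section~\ref{section:Introduction} and Remark~\ref{remark:ChangeOfMeasure}), the arbitrage-free value at $t=0$ of the claim with pay-off~\eqref{eq:payoffoption} is its expected pay-off. Conditioning on $Y_T = y$ --- which is legitimate because $X_T^u$ is independent of $Y$, so that $g(A_T^u Y_T + B_T^u)$ becomes deterministic and the conditional law of $F := F_T(\tau_1,\tau_2)$ given $Y_T = y$ is well-defined (recall $T < \tau_1$, so $F$ itself is defined) --- this gives
\[
C_0(T,K,\tau_1,\tau_2; y) = (\tau_2 - \tau_1)\, \E\left[ (F - K)^+ \, | \, Y_T = y \right].
\]
Next I would invoke Assumption~\ref{assumption:lognormalapproximation} to replace $F$ by $\tilde F \sim LN(\mu_F(y), \sigma_F^2(y))$, so that it remains to evaluate $(\tau_2 - \tau_1)\,\E[(\tilde F - K)^+]$ with $\mu_F, \sigma_F$ as in Lemma~\ref{lemma:lognormalparametersforoptionsprice}.

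The second step is the routine lognormal expectation. Writing $\tilde F = \exp(\mu_F(y) + \sigma_F(y) Z)$ with $Z \sim N(0,1)$, splitting
\[
\E[(\tilde F - K)^+] = \E\left[\tilde F\, \ind_{\{\tilde F > K\}}\right] - K\, P(\tilde F > K),
\]
evaluating $P(\tilde F > K) = \Phi(\delta_2(y))$ with $\delta_2(y) = (\mu_F(y) - \ln K)/\sigma_F(y)$, and completing the square in the first term (substitution $z \mapsto z - \sigma_F(y)$) yields
\[
\E[(\tilde F - K)^+] = e^{\mu_F(y) + \sigma_F^2(y)/2}\, \Phi(\delta_1(y)) - K\, \Phi(\delta_2(y)),
\]
where $\delta_1(y) = \delta_2(y) + \sigma_F(y)$ --- exactly the constants appearing in the statement.

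Finally, I would plug in the value of $\mu_F(y)$ from Lemma~\ref{lemma:lognormalparametersforoptionsprice}, which gives $e^{\mu_F(y) + \sigma_F^2(y)/2} = \frac{1}{\tau_2 - \tau_1}\int_{\tau_1}^{\tau_2} g(A_T^u y + B_T^u)\, du$ (this is just $\E[\tilde F \, | \, Y_T = y]$, cf. Theorem~\ref{theorem:optionsmomentsfutures}); multiplying the displayed identity by $(\tau_2 - \tau_1)$ then collapses the first term to $\Phi(\delta_1(y)) \int_{\tau_1}^{\tau_2} g(A_T^u y + B_T^u)\, du$ and the second to $(\tau_2 - \tau_1)\,K\,\Phi(\delta_2(y))$, which is the asserted formula. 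There is no genuine analytic difficulty here once the lognormal reduction is in place --- everything after it is Black's formula --- so the one point deserving care is the first step: the justification of valuing the claim as an expectation under the chosen measure and the legitimacy of conditioning on $Y_T = y$, which rests on the independence of $X$ and $Y$ and on having $T < \tau_1$.
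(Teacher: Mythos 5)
Your proposal is correct and follows the same route as the paper: write $C_0(T,K,\tau_1,\tau_2;y)/(\tau_2-\tau_1)$ as the conditional expectation $\E[(F-K)^+\,|\,Y_T=y]$, use Assumption~\ref{assumption:lognormalapproximation} to treat $(F\,|\,Y_T=y)$ as $LN(\mu_F(y),\sigma_F^2(y))$, split the pay-off into $\E[F\,\ind_{\{F\geq K\}}\,|\,Y_T=y]-K\,P(F\geq K\,|\,Y_T=y)$, and evaluate via the standard lognormal (Black) computation. The paper compresses the final step into ``direct computation''; your write-up simply makes explicit the completion of the square and the substitution of $\mu_F(y)$ from Lemma~\ref{lemma:lognormalparametersforoptionsprice} that turns $e^{\mu_F(y)+\sigma_F^2(y)/2}$ into $\frac{1}{\tau_2-\tau_1}\int_{\tau_1}^{\tau_2}g(A_T^u y+B_T^u)\,du$.
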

\begin{proof}
Using the discounted conditional expectation of the pay-off given in~\eqref{eq:payoffoption} yields
\begin{align*}
\frac{C_0(T,K, \tau_1, \tau_2; y)}{\tau_2 - \tau_1 } &= \E\left[ (F - K)^+ \, | \, Y_T = y \right]\\
&=  \E[ F \, \ind_{\{F \geq K \}} \, | \, Y_T = y ] - K \, P(F \geq K \, | \, Y_T = y) ,
\end{align*}
where noting that we have $(F \, | \, Y_T = y) \sim LN(\mu_F(y), \sigma_F^2(y))$, yields the result by direct computation.
\end{proof}

As an immediate consequence we have:
\begin{corollary}[Call option price] \label{corollary:calloptionsprice2}
Assume that $(g,Y_t)$ allows for the affine structural component decomposition and let Assumption~\ref{assumption:lognormalapproximation} hold. Let $\mu_F$ and $\sigma_F$ be given by Lemma~\ref{lemma:lognormalparametersforoptionsprice}. The price of a call option at $t=0$ with pay-off given by~\eqref{eq:payoffoption} equals
\begin{equation}\label{eq:realcalloptionprice}
C_0(T,K, \tau_1, \tau_2) = \E C_0(T,K, \tau_1, \tau_2; Y_T),
\end{equation} 
where the conditional call option price~$C_0(T,K, \tau_1, \tau_2; y)$ is given in Proposition~\ref{proposition:calloptionsprice}.
\end{corollary}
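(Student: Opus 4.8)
The plan is to obtain the identity directly from the law of total expectation (the tower property of conditional expectation), reading off the inner conditional expectation from Proposition~\ref{proposition:calloptionsprice}. Since $r=0$, the arbitrage-free price at $t=0$ of the call with payoff~\eqref{eq:payoffoption} is, by definition, the (discounted) expected payoff, i.e.\ with $F := F_T(\tau_1,\tau_2)$ understood under the lognormal approximation of Assumption~\ref{assumption:lognormalapproximation},
\[
C_0(T,K,\tau_1,\tau_2) = (\tau_2 - \tau_1)\,\E\bigl[(F - K)^+\bigr].
\]

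First I would note that $y \mapsto C_0(T,K,\tau_1,\tau_2;y)$ is Borel measurable, so that $C_0(T,K,\tau_1,\tau_2;Y_T)$ is a bona fide random variable: this holds because $\mu_F$ and $\sigma_F$ of Lemma~\ref{lemma:lognormalparametersforoptionsprice} are measurable functions of $y$ --- they are assembled from $g$ (Borel measurable, since $g(Y_t)$ is a process) through Lebesgue integrals over $[\tau_1,\tau_2]$, the deterministic kernels $w^X_t$ and $w^Y_t$, logarithms, and $\Phi$, all continuous in the relevant arguments. Then, using that $(F \mid Y_T = y) \sim LN(\mu_F(y),\sigma_F^2(y))$, I would condition on $Y_T$ and apply the tower property,
\[
\E\bigl[(F - K)^+\bigr] = \E\Bigl[\,\E\bigl[(F - K)^+ \,\big|\, Y_T\bigr]\,\Bigr],
\]
and invoke Proposition~\ref{proposition:calloptionsprice}, whose proof shows precisely that $(\tau_2-\tau_1)\,\E[(F - K)^+ \mid Y_T = y] = C_0(T,K,\tau_1,\tau_2;y)$. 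Multiplying by $(\tau_2-\tau_1)$ and combining the two displays gives~\eqref{eq:realcalloptionprice}.

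The only real work is checking the integrability that legitimises the conditioning step and makes both sides of~\eqref{eq:realcalloptionprice} well-defined. Here $(F-K)^+$ is integrable because $0\le(F-K)^+\le|F|+|K|$ and $\E|F|<\infty$ --- which follows from Theorem~\ref{theorem:optionsmomentsfutures} together with the integrability in Assumption~\ref{assumption:marketimbalance}, via the affine decomposition~\eqref{eq:affinestructuralcomponentdecomposition} and Lemma~\ref{lemma:expectationinstantaneousforward} (indeed $\E F = (\tau_2-\tau_1)^{-1}\int_{\tau_1}^{\tau_2} f_0(u)\,du$ under the approximation) --- so the tower property applies; and since $0\le C_0(T,K,\tau_1,\tau_2;y)\le\E[F\mid Y_T=y]+(\tau_2-\tau_1)|K|$, the conditional first moment furnished by Theorem~\ref{theorem:optionsmomentsfutures} bounds $C_0(T,K,\tau_1,\tau_2;Y_T)$, giving integrability and hence finiteness of the right-hand side of~\eqref{eq:realcalloptionprice}. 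I expect the only point beyond this bookkeeping worth flagging is that the stated identity is \emph{exact} only for the lognormal approximant, so it inherits --- but does not worsen --- the approximation already made in Proposition~\ref{proposition:calloptionsprice} and Assumption~\ref{assumption:lognormalapproximation}.
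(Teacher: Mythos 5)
Your proof is correct and follows exactly the route the paper intends: the paper states this corollary as an ``immediate consequence'' of Proposition~\ref{proposition:calloptionsprice} with no written proof, the implicit argument being precisely the tower property $\E[(F-K)^+]=\E\bigl[\E[(F-K)^+\mid Y_T]\bigr]$ that you spell out. Your additional measurability and integrability checks are sound bookkeeping that the paper omits, and your closing remark about the identity being exact only for the lognormal approximant is a fair (and accurate) caveat.
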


When the distribution of~$Y_T$ is specified the price of a call option given by Equation~\eqref{eq:realcalloptionprice} might be evaluated analytically, numerically, or through simulative methods such as Monte Carlo estimation. Alternatively, with further assumptions on the distribution of~$Y_T$ this expectation could also be approximated differently.

\subsection{Model representation of exchange traded products} \label{section:OverviewContracts}
In this section we give an overview of the prices of several different electricity contracts in this HJM framework. Although there is not a single unique quoted continuous electricity price we regard $F_t(\tau_1, \tau_2)$ as the true fair price for the delivery period from $\tau_1$ to $\tau_2$ at any trading time~$t$.

\paragraph{Futures price}
The price of a futures contract at time~$t$ delivering 1~MW continuously from~$\tau_1$ to~$\tau_2$ is given by Definition~\ref{definition:futuresprice} and denoted by $F_t(\tau_1,\tau_2)$.

\paragraph{Options on futures}
In the setting of Section~\ref{section:options} the price of call and put options on futures contracts can be computed by the Black-Scholes formula as given by Proposition~\ref{proposition:calloptionsprice} or Corollary~\ref{corollary:calloptionsprice2}.

\paragraph{Day-ahead spot prices}
The day-ahead spot price equals the futures price within this framework as discussed in Example~\ref{example:dayaheadspotprice}.

\paragraph{$\text{ID}_1$ and $\text{ID}_3$ price}
The $\text{ID}_1$ and $\text{ID}_3$ price indices on the German intraday market are given as the one and three hour volume-weighted average of all intraday trades before delivery. Therefore, we suggest the $\text{ID}_n$ price for the delivery period from $\tau_1$ to $\tau_2$ to equal
\[
\text{ID}_n(\tau_1, \tau_2) :=  \frac{2}{2n -1} \int_{\tau_1 - n}^{\tau_1 - 0.5} F_{u}(\tau_1, \tau_2) \, du,
\]
where $n=1 $ or $n=3$ and the subtraction of $\tau_1$ is meant in hours.

\section{Examples of the structural component} \label{section:applications}
First we show how two classical day-ahead spot price models can be used in this HJM framework. Then we also introduce a structural model approach as well as a multi-factor model approach for $Y$.

To make defining a model easier in this framework we introduce the relative structural component, which can be used to set the initial price forward  curve~(PFC) to an existing one:
\begin{definition}[Relative structural component] \label{definition:RelativeStructuralComponent}
The additive mean-normalized version of $g(Y_\tau)$
\[
I^a_\tau := g(Y_\tau) - \E g(Y_\tau)
\]
is called the \emph{additive relative structural component} and its multiplicative mean-normalized version
\[
I^m_\tau := \frac{g(Y_\tau)}{\E g(Y_\tau)}
\]
is called the \emph{multiplicative relative structural component}.
\end{definition}

We directly obtain from these definitions:
\begin{corollary}
The relative structural components~$I^a$ and $I^m$ are stochastic processes with constant expectation $\E I^a_\tau = 0$ and $\E I^m_\tau = 1$ for all $\tau \geq 0$.
\end{corollary}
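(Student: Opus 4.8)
The plan is to derive the statement directly from the definitions in Definition~\ref{definition:RelativeStructuralComponent} together with the integrability and positivity guarantees in Assumption~\ref{assumption:marketimbalance}. First I would observe that, by Assumption~\ref{assumption:marketimbalance}, for every fixed $\tau \geq 0$ the quantity $\E g(Y_\tau)$ is a well-defined, finite, and strictly positive real number; in particular it is a deterministic constant. Consequently $I^a_\tau = g(Y_\tau) - \E g(Y_\tau)$ is the difference of an integrable random variable and a constant, hence an integrable random variable, and $I^m_\tau = g(Y_\tau) / \E g(Y_\tau)$ is an integrable random variable divided by a nonzero constant, again integrable. Since $g(Y_\tau)$ is $\F_\tau$-measurable (as $Y$ is adapted and $g$ is measurable), so are $I^a_\tau$ and $I^m_\tau$, so both $I^a := \{I^a_\tau; \tau \geq 0\}$ and $I^m := \{I^m_\tau; \tau \geq 0\}$ are (adapted) stochastic processes.

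Next I would compute the expectations. For the additive version, linearity of expectation gives
\[
\E I^a_\tau = \E\!\left[ g(Y_\tau) - \E g(Y_\tau) \right] = \E g(Y_\tau) - \E g(Y_\tau) = 0,
\]
using that $\E g(Y_\tau)$ is a constant. For the multiplicative version, pulling the strictly positive deterministic constant $\E g(Y_\tau)$ out of the expectation gives
\[
\E I^m_\tau = \E\!\left[ \frac{g(Y_\tau)}{\E g(Y_\tau)} \right] = \frac{1}{\E g(Y_\tau)} \, \E g(Y_\tau) = 1.
\]
Since $\tau \geq 0$ was arbitrary, these identities hold for all $\tau \geq 0$, which is the claim.

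There is no real obstacle here: the only thing that needs care is that $\E I^m_\tau$ makes sense, i.e. that the normalising constant $\E g(Y_\tau)$ is finite and nonzero, and that $g(Y_\tau)$ itself is integrable so that the manipulations are legitimate — both points are exactly what Assumption~\ref{assumption:marketimbalance} supplies ($\E|g(Y_t)| < \infty$ and $\E g(Y_t) > 0$). So the proof is a one-line consequence of linearity and positive-homogeneity of the expectation once these assumptions are invoked.
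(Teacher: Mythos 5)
Your proof is correct and matches the paper's treatment: the paper gives no explicit proof, stating only that the corollary follows directly from the definitions, and your argument via linearity of expectation and the constancy, finiteness, and positivity of $\E g(Y_\tau)$ (guaranteed by Assumption~\ref{assumption:marketimbalance}) is exactly the intended one-line justification.
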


\begin{corollary}[Arithmetic PFC decomposition] \label{corollary:additivepfcdecomposition}
For a given initial price forward curve~$f_0(\tau)$ the forward kernel equals
\[
f_t(\tau) =  X_t^\tau \, \left( f_0(\tau) + \E[ I^a_\tau \, | \, \F_t ] \right),
\]
where $I^a_\tau$ is the arithmetic relative structural component given in Definition~\ref{definition:RelativeStructuralComponent}.
\end{corollary}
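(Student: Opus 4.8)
The plan is to reduce the statement to a one-line rearrangement of Definition~\ref{definition:instantaneousforward} combined with Lemma~\ref{lemma:expectationinstantaneousforward}. First I would recall from Definition~\ref{definition:RelativeStructuralComponent} that $I^a_\tau = g(Y_\tau) - \E g(Y_\tau)$, which rearranges (almost surely) to $g(Y_\tau) = \E g(Y_\tau) + I^a_\tau$. Assumption~\ref{assumption:marketimbalance} guarantees $\E|g(Y_\tau)| < \infty$, so $g(Y_\tau)$ and $I^a_\tau$ are both integrable and the conditional expectation $\E[\,\cdot\,|\,\F_t]$ acts linearly on this decomposition.

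Next I would apply $\E[\,\cdot\,|\,\F_t]$ to both sides, obtaining $\E[g(Y_\tau)\,|\,\F_t] = \E\bigl[\E g(Y_\tau)\,|\,\F_t\bigr] + \E[I^a_\tau\,|\,\F_t]$. Since $\E g(Y_\tau)$ is a deterministic real number, the first term on the right equals $\E g(Y_\tau)$ itself, and Lemma~\ref{lemma:expectationinstantaneousforward} identifies this constant with the price forward curve value $f_0(\tau)$. Hence $\E[g(Y_\tau)\,|\,\F_t] = f_0(\tau) + \E[I^a_\tau\,|\,\F_t]$.

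Finally I would multiply through by the market noise $X_t^\tau$ and invoke Definition~\ref{definition:instantaneousforward}, $f_t(\tau) = X_t^\tau\,\E[g(Y_\tau)\,|\,\F_t]$, to conclude $f_t(\tau) = X_t^\tau\bigl(f_0(\tau) + \E[I^a_\tau\,|\,\F_t]\bigr)$, which is the claimed identity. There is essentially no obstacle here: the only points meriting a moment's care are the integrability needed to split the conditional expectation linearly (supplied by Assumption~\ref{assumption:marketimbalance}) and the pull-out of the deterministic constant $\E g(Y_\tau) = f_0(\tau)$ from the conditional expectation, which is immediate. The same argument carried out in multiplicative form, using $g(Y_\tau) = \E g(Y_\tau)\cdot I^m_\tau$, yields the companion identity $f_t(\tau) = X_t^\tau\,f_0(\tau)\,\E[I^m_\tau\,|\,\F_t]$.
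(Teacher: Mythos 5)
Your argument is mathematically sound and does establish the displayed identity, but it is not the paper's argument, and the difference is more than cosmetic. You read $f_0(\tau)$ as the model-implied initial kernel, invoke Lemma~\ref{lemma:expectationinstantaneousforward} to identify $\E g(Y_\tau) = f_0(\tau)$, and then the decomposition $g(Y_\tau) = \E g(Y_\tau) + I^a_\tau$ plus linearity of conditional expectation finishes the job. That is the \emph{verification} direction: it works precisely when the structural component has already been calibrated so that $\E g(Y_\tau) = f_0(\tau)$. The paper instead treats $f_0(\tau)$ as an \emph{externally given} curve (e.g.\ constructed from market quotes as in Remark~\ref{remark:PFCConstruction}) that need not coincide with $\E g(Y_\tau)$ for the original pair $(g, Y)$; its proof builds an extended structural component $\tilde{Y}_\tau = (Y_\tau, f_0(\tau))$ with $\tilde{g}(y,x) = x + g(y) - \E g(y)$, checks that this pair still satisfies Assumption~\ref{assumption:marketimbalance}, and observes $\tilde{g}(\tilde{Y}_\tau) = f_0(\tau) + I^a_\tau$. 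What that construction buys — and what your route does not — is the ability to graft \emph{any} prescribed PFC onto an uncalibrated structural component without touching $(g,Y)$; this is exactly how the corollary is used in Section~\ref{section:applications}, where the examples conclude with ``any initial price forward curve $f_0(\tau)$ can be used.'' So if the corollary is read with that intent, your proof has a gap at the step ``Lemma~\ref{lemma:expectationinstantaneousforward} identifies this constant with $f_0(\tau)$'': for an arbitrary external $f_0$ that identification is an additional calibration hypothesis, not a consequence of the framework. Your closing remark on the multiplicative companion is fine (it additionally uses $\E g(Y_\tau) > 0$ from Assumption~\ref{assumption:marketimbalance} to divide), but it inherits the same caveat.
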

\begin{proof}
Define an extended structural component $\tilde{Y}_\tau = (Y_\tau, f_0(\tau)) \in R^{n+1}$, where $f_0(\tau)$ is the constructed PFC, and another function~$\tilde{g}(y, x) = x + g(y) - \E g(y)$. It is clear that $\tilde{Y}$ and $\tilde{g}$ satisfy Assumption~\ref{assumption:marketimbalance}. It follows immediately that $\tilde{g}(\tilde{Y}(\tau)) = f_0(\tau) + I^a_\tau$, which proves the result.
\end{proof}

\begin{corollary}[Geometric PFC decomposition] \label{corollary:multiplicativepfcdecomposition}
For a given initial price forward curve~$f_0(\tau)$ the forward kernel equals
\[
f_t(\tau) = f_0(\tau) \, X_t^\tau \, \E[ I^m_\tau \, | \, \F_t ],
\]
where $I^m_\tau$ is the geometric relative structural component given in Definition~\ref{definition:RelativeStructuralComponent}.
\end{corollary}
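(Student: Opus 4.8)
The plan is to mirror the proof of Corollary~\ref{corollary:additivepfcdecomposition} almost verbatim, only replacing the additive mean-normalisation by the multiplicative one. First I would recall the definition of the multiplicative relative structural component, $I^m_\tau = g(Y_\tau) / \E g(Y_\tau)$, which is well-defined because Assumption~\ref{assumption:marketimbalance} guarantees $\E g(Y_\tau) > 0$ for all $\tau \geq 0$. The key algebraic observation is simply that $g(Y_\tau) = f_0(\tau) \cdot I^m_\tau$: indeed, by Lemma~\ref{lemma:expectationinstantaneousforward} we have $\E g(Y_\tau) = f_0(\tau)$, so $f_0(\tau) \, I^m_\tau = f_0(\tau) \, g(Y_\tau) / \E g(Y_\tau) = g(Y_\tau)$.

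Next I would substitute this identity into the definition of the forward kernel. Starting from Definition~\ref{definition:instantaneousforward}, $f_t(\tau) = X_t^\tau \, \E[g(Y_\tau) \mid \F_t]$, I would pull the deterministic constant $f_0(\tau)$ out of the conditional expectation:
\[
f_t(\tau) = X_t^\tau \, \E[\, f_0(\tau) \, I^m_\tau \mid \F_t \,] = f_0(\tau) \, X_t^\tau \, \E[\, I^m_\tau \mid \F_t \,],
\]
which is exactly the claimed formula. To make this rigorous in the same style as the proof of Corollary~\ref{corollary:additivepfcdecomposition}, one can alternatively define an extended structural component $\tilde Y_\tau = (Y_\tau, f_0(\tau)) \in \R^{n+1}$ together with the function $\tilde g(y,x) = x \, g(y) / \E g(y)$, check that $\tilde g(\tilde Y_\tau) = f_0(\tau) \, I^m_\tau$, and verify that $(\tilde g, \tilde Y)$ still satisfies Assumption~\ref{assumption:marketimbalance} — in particular $\E|\tilde g(\tilde Y_t)| = |f_0(t)| < \infty$ since the PFC is a fixed finite-valued function, $\E \tilde g(\tilde Y_t) = f_0(t) > 0$ by construction, and the initial-value condition holds since $\tilde g(\tilde Y_0) = f_0(0) \, g(y_0)/\E g(Y_0) = f_0(0) = \tilde f_0(0)$.

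There is essentially no obstacle here: the only subtlety worth flagging is the division by $\E g(Y_\tau)$, which is precisely what Assumption~\ref{assumption:marketimbalance} was designed to license, and the (harmless) point that $f_0(\tau)$ is deterministic and hence measurable with respect to $\F_t$ so it can be taken out of the conditional expectation. So the proof is a one-line computation; I would simply write it out as such, referring back to Lemma~\ref{lemma:expectationinstantaneousforward} for $\E g(Y_\tau) = f_0(\tau)$ and to Assumption~\ref{assumption:marketimbalance} for positivity of the denominator.
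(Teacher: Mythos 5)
Your proof is correct and follows essentially the same route as the paper, whose own proof of this corollary is simply ``analogously to the arithmetic case'': you mirror the extended-structural-component construction $\tilde Y_\tau = (Y_\tau, f_0(\tau))$, $\tilde g(y,x) = x\,g(y)/\E g(y)$, and also give the equivalent one-line substitution $g(Y_\tau) = f_0(\tau)\, I^m_\tau$ with the deterministic factor pulled out of the conditional expectation. The only (harmless) slip is the claim $\E|\tilde g(\tilde Y_t)| = |f_0(t)|$, which holds only if $g(Y_t)$ has constant sign; in general one just has $\E|\tilde g(\tilde Y_t)| = |f_0(t)|\,\E|g(Y_t)|/\E g(Y_t) < \infty$ by Assumption~\ref{assumption:marketimbalance}, which still suffices.
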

\begin{proof}
The result can be shown analogously to the proof of Corollary~\ref{corollary:additivepfcdecomposition}.
\end{proof}

The interpretation of these decompositions is that today's price forward curve is the expectation of the forward kernel that is being disturbed by the market noise~$X_t^\tau$ in trading time~$t$ and by the structural component in delivery time~$\tau$. Depending on the choice of the structural component~$(g, Y_\tau)$ this disturbance can be chosen to be multiplicatively in case of the geometric PFC decomposition or additively in case of the arithmetic PFC decomposition.

\subsection{Classical spot models}
We can use classical day-ahead spot price models in our framework by choosing $g(Y_t) = S_t$, where $S_t$ denotes the spot price at time~$t$. Two examples of spot price models that we explicitly compute in this section are the spot price models by \cite{Schwartz2000} and \cite{Lucia2002}.

For both examples we need the same structural component and therefore we assume in this subsection that it is given by $Y_t = (y_t^1, y_t^2) \in \R^2$. The first process is an Ornstein-Uhlenbeck process, i.e.
\begin{equation} \label{eq:schwartzfactorone}
dy^1_t = -\kappa \, y^1_t \, dt + \sigma_1 \, dW_t^1, \quad y^1_0 = 0,
\end{equation}
and the second
\begin{equation}  \label{eq:schwartzfactortwo}
y^2_t = \mu_2 t + \sigma_2 \rho W^1_t + \sigma_2 \sqrt{1-\rho^2} W^2_t
\end{equation}
is a (correlated) Brownian motion with drift. The standard one-dimensional Brownian motions~$W^1$ and $W^2$ are assumed to be independent. The parameters $\kappa >0$, $\sigma_1,\sigma_2 > 0$, $-1 \leq \rho \leq 1$, and $\mu \in \R$ are assumed to be real-valued.

\begin{example}[Schwartz and Smith] \label{example:schwartzsmithmodel}
\cite{Schwartz2000} define the day-ahead spot price using the function $g(y_1, y_2) = e^{y_1 + y_2}$, i.e. they chose the price to equal $S_t := g(Y_t) = \exp(y^1_\tau + y^2_\tau)$. In the HJM framework this transfers to the following forward kernel
\[
f_t(\tau) = X_t^\tau \, \E[e^{y^1_\tau + y^2_\tau} \, | \, \F_t],
\]
where we do not assume any extra conditions on $X^\tau$ apart from Assumption~\ref{assumption:marketnoise}.

In this setting we can explicitly compute the conditional expectation on $g(Y_\tau)$ and we find
\begin{align*}
\ln \E[e^{y^1_\tau + y^2_\tau} \, | \, \F_t] &= e^{-\kappa (\tau -t)} y^1_t + y^2_t +   \left( \mu_2+  \tfrac{\sigma_2^2}{2} \right) (\tau - t)  \\
& \quad + \frac{\sigma_1^2}{4 \kappa} \left(1-e^{-2\kappa(\tau-t)} \right) + \frac{\rho \, \sigma_1 \sigma_2 }{\kappa} \left(1-e^{-\kappa(\tau-t)} \right).
\end{align*}
This implies that this model for $g$ and $Y_\tau$ satisfies the affine structural component decomposition of Definition~\ref{definition:affinestructuralcomponentdecomposition}. The coefficient~$A_t^\tau$ of the decomposition is given by
\begin{equation} \label{eq:ASCDforSchwartzSmith}
A_t^\tau = 
\begin{pmatrix}
e^{-\kappa (\tau-t)} & 0 \\
0 & 1
\end{pmatrix}
\end{equation}
and $B_t^\tau$ can be chosen to be any vector in $\R^2$ such that 
\[
\ln g(B_t^\tau) = \left( \mu_2+  \frac{\sigma_2^2}{2} \right) (\tau - t)   + \frac{\sigma_1^2}{4 \kappa} \left(1-e^{-2\kappa(\tau-t)} \right) + \frac{\rho \, \sigma_1 \sigma_2 }{\kappa} \left(1-e^{-\kappa(\tau-t)} \right) 
\]
holds.

Since the function $g$ is multiplicative in nature, the geometric PFC decomposition, Corollary~\ref{corollary:multiplicativepfcdecomposition}, is especially suited for this model. The conditional expectation of the multiplicative relative structural component is given by
\begin{align*}
\ln \E[ I^m_\tau \, | \, \F_t ] &= \ln \frac{g(A_t^\tau \, Y_t + B_t^\tau)}{\E g(Y_\tau)} \\
&=  e^{-\kappa (\tau -t)} y^1_t + y^2_t - \left( \mu_2+  \tfrac{\sigma_2^2}{2} \right) t \\
&\quad + \frac{\sigma_1^2 e^{-2\kappa\tau}}{4 \kappa} \left(1-e^{2\kappa t} \right) + \frac{\rho \, \sigma_1 \sigma_2 e^{-\kappa \tau}}{\kappa} \left(1-e^{\kappa t} \right),
\end{align*}
and the forward kernel decomposes to
\[
f_t(\tau)= f_0(\tau) \, X_t^\tau \, e^{e^{-\kappa (\tau -t)} y^1_t + y^2_t - \left( \mu_2+  \tfrac{\sigma_2^2}{2} \right) t  + \frac{\sigma_1^2 e^{-2\kappa\tau}}{4 \kappa} \left(1-e^{2\kappa t} \right) + \frac{\rho \, \sigma_1 \sigma_2 e^{-\kappa \tau}}{\kappa} \left(1-e^{\kappa t} \right)},
\]
where any initial price forward curve~$f_0(\tau)$ can be used.
\end{example}

\begin{example}[Lucia and Schwartz]
\cite{Lucia2002} discuss four different models. Here, we highlight the arithmetic two factor model for the spot price. This model is defined by the function~$g(y_1, y_2) = y_1 + y_2$ and the forward kernel equals 
\[
f_t(\tau) = X_t^\tau \,  \E[y^1_\tau + y^2_\tau \, | \, \F_t].
\]
Again, apart from Assumption~\ref{assumption:marketnoise} the process~$X^\tau$ can be chosen freely.

The conditional expectation can easily be computed as
\[
\E[y^1_\tau + y^2_\tau \, | \, \F_t] =  e^{-\kappa(\tau_t)} \, y^1_t + y^2_t + \mu_2 (\tau -t)
\]
and the affine structural component decomposition of Definition~\ref{definition:affinestructuralcomponentdecomposition} follows immediately with the coefficient~$A_t^\tau$ given by Equation~\eqref{eq:ASCDforSchwartzSmith} and $B_t^\tau$ can be any vector in $\R^2$ such that $g(B_t^\tau) = \mu_2 (\tau -t)$.

The additive nature of $g$ makes the arithmetic PFC decomposition, Corollary~\ref{corollary:additivepfcdecomposition}, the best suited candidate for this model. It follows that
\[
f_t(\tau) = X_t^\tau \, \left( f_0(\tau)  + e^{-\kappa(\tau_t)} \, y^1_t + y^2_t - \mu_2 t \right)
\]
for any initial price forward curve~$f_0(\tau)$. We continue the study of this type of forward kernel in Section~\ref{sec:factormodel} with a factor model approach.
\end{example}

In the rest of this section we will give two further examples of the structural component~$Y$. The first is based on the structural model approach for day-ahead spot prices and the other uses multi-factor models, which are the sum of Ornstein-Uhlenbeck type processes, cf. \cite{Benth2008}.

\subsection{Structural model approach} \label{sec:structuralmodel} 
We will use the HJM framework to model the structural component by a structural model approach: a spot price modelling technique started by~\cite{Barlow2002} which uses the idea of equilibrium of supply and demand to derive a spot price. In contrast to reduced-form models which need to implement a jump component to model spikes, structural models use a non-linear transformation of a (Gaussian) diffusion process to reach this goal. This method has been developed further by many authors, e.g. \cite{Aid2009,Wagner2014}.

For the real-valued demand process~$D$ we use a Gaussian Ornstein-Uhlenbeck process, i.e.
\[
dD_t = -\lambda \, D_t \, dt + \sigma \, dW_t,  \quad D_0 = 0.
\]
We choose the structural component to equal
\[
Y_t :=
\begin{pmatrix}
\beta(t) \\
D_t
\end{pmatrix},
\]
where $\beta(t)$ is a real-valued deterministic function. Furthermore, we define the function~$g$ as follows
\[
g(y_1, y_2) = \gamma + y_1 \, \sinh(\alpha \, y_2) = \gamma + y_1  \frac{e^{\alpha \, y_2} - e^{- \alpha \, y_2}}{2}
\]
for $\alpha > 0$ and $\gamma > 0$. Through the first coordinate of $Y_t$, i.e. $\beta(t)$, we associate $y_1$ with the evolution of time and $y_2$ through the second coordinate of $Y_t$, namely $D_t$, with the demand. Therefore, $g(Y_t)$ represents the price at time~$t$ for a load of~$D_t$ through the \emph{merit order curve}.

\begin{remark}[Extension of the model]
It might be convenient to use more realistic models, such as described by \cite{Wagner2014}. This is an extension of the OU model, where stochastic processes for wind and solar infeed are subtracted from the demand process~$D$. This difference is seen to model power prices even more accurately. It can easily be seen that the structural component~$Y_t$ and function~$g$ can be extended for these processes.
\end{remark}

Using the auxiliary function~$\nu^2(s) := \frac{\sigma^2}{2 \lambda}(1 - e^{-2\lambda s})$ the affine structural component decomposition of Definition~\ref{definition:affinestructuralcomponentdecomposition} can be derived from the following theorem:
\begin{theorem}\label{theorem:structuralmodelconditionalexpectation}
The conditional expectation of the structural component is given by
\[
\E[g(Y_\tau) \, | \, \F_t ] = \gamma +  \beta(\tau)\, e^{\frac{\alpha^2 }{2}  \nu^2(\tau - t)} \sinh(\alpha \, e^{-\lambda (\tau-t)}  \, D_t)
\]
for all $\tau \geq t \geq 0$.
\end{theorem}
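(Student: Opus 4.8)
The plan is to reduce the claim to the conditional moment generating function (Laplace transform) of an Ornstein--Uhlenbeck process. Since $g(y_1,y_2) = \gamma + y_1\sinh(\alpha y_2)$ and $Y_\tau = (\beta(\tau), D_\tau)$ with $\beta$ deterministic, linearity of conditional expectation immediately gives $\E[g(Y_\tau)\mid\F_t] = \gamma + \beta(\tau)\,\E[\sinh(\alpha D_\tau)\mid\F_t]$. Writing $\sinh(\alpha D_\tau) = \tfrac12\big(e^{\alpha D_\tau} - e^{-\alpha D_\tau}\big)$, the whole problem comes down to evaluating $\E[e^{\alpha D_\tau}\mid\F_t]$ and $\E[e^{-\alpha D_\tau}\mid\F_t]$.

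Next I would use the explicit solution of the OU SDE: for $\tau \geq t$,
\[
D_\tau = e^{-\lambda(\tau-t)}D_t + \sigma\int_t^\tau e^{-\lambda(\tau-u)}\,dW_u .
\]
The stochastic integral on the right is independent of $\F_t$ and, being a Wiener integral of a deterministic integrand, is centered Gaussian with variance $\sigma^2\int_t^\tau e^{-2\lambda(\tau-u)}\,du = \tfrac{\sigma^2}{2\lambda}\big(1-e^{-2\lambda(\tau-t)}\big) = \nu^2(\tau-t)$. Hence, conditionally on $\F_t$, the variable $D_\tau$ is Gaussian with mean $e^{-\lambda(\tau-t)}D_t$ and variance $\nu^2(\tau-t)$, and its conditional law depends on $\F_t$ only through $D_t$ (the Markov property of $D$).

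Then I would apply the Gaussian identity $\E[e^{cZ}] = \exp\!\big(cm + \tfrac12 c^2 v\big)$ for $Z\sim N(m,v)$, with $c=\pm\alpha$, $m = e^{-\lambda(\tau-t)}D_t$, and $v = \nu^2(\tau-t)$. Subtracting the two resulting exponentials and dividing by two factors out the common term $e^{\frac{\alpha^2}{2}\nu^2(\tau-t)}$ and leaves $\sinh\!\big(\alpha e^{-\lambda(\tau-t)}D_t\big)$; multiplying by $\beta(\tau)$ and adding $\gamma$ yields the stated formula. The affine structural component decomposition of Definition~\ref{definition:affinestructuralcomponentdecomposition} is then read off by matching terms, exactly as in the Schwartz--Smith example.

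There is no deep obstacle here; the only points needing a word of care are (i) integrability, so that the conditional expectation is well defined and Tonelli applies in the Laplace-transform step --- but $D_\tau$ is Gaussian, so $\E|g(Y_\tau)| \le |\gamma| + |\beta(\tau)|\,\E|\sinh(\alpha D_\tau)| < \infty$, consistent with Assumption~\ref{assumption:marketimbalance}; and (ii) the justification that the conditional distribution of $D_\tau$ given $\F_t$ is the Gaussian claimed above, which follows from the independent-increments structure of the Wiener integral in the displayed solution of the SDE.
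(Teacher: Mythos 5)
Your proof is correct and follows essentially the same route as the paper: both rest on the decomposition $D_\tau = e^{-\lambda(\tau-t)}D_t + \nu(\tau-t)\,\varepsilon$ with $\varepsilon\sim\mathcal{N}(0,1)$ independent of $\F_t$, followed by a Gaussian exponential-moment computation. The only cosmetic difference is that you expand $\sinh$ into two exponentials and apply the Gaussian MGF with $c=\pm\alpha$, whereas the paper uses the $\sinh$ addition formula together with the symmetry of the normal distribution to factor out $\E\bigl[e^{\alpha\nu(\tau-t)\varepsilon}\bigr]$ directly.
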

\begin{proof}
For Gaussian OU processes we have the following decomposition
\[
D_\tau \overset{\text{d}}{=} e^{-\lambda (\tau-t)} \, D_t +  \nu(\tau-t) \, \varepsilon, \quad \varepsilon \sim \mathcal{N}(0,1).
\]
Now, exploiting the decomposition and plugging it into the definition we get
\begin{align*}
\E[g(Y_\tau) \, | \, \F_t ] &=   \gamma +  \beta(\tau) \, \E\left[ \sinh(\alpha \, D_\tau) \, | \, \F_t \right]  \\
&= \gamma + \beta(\tau)  \sinh(\alpha \, e^{-\lambda (\tau-t)} \, D_t) \, \E\left[ e^{\alpha  \nu(\tau-t) \varepsilon}\right] \\
&= \gamma + \beta(\tau)\, e^{\frac{\alpha^2 }{2}  \nu^2(\tau - t)} \sinh(\alpha \, e^{-\lambda (\tau-t)}  \, D_t)
\end{align*}
by symmetry of the normal distribution.
\end{proof}

\begin{corollary}[Affine structural component decomposition]
With coefficients given by
\[
A_t^\tau = 
\begin{pmatrix}
\frac{\beta(\tau)}{\beta(t)} e^{\frac{\alpha^2 }{2} \nu^2(\tau - t)} & 0 \\
0 & \alpha e^{-\lambda (\tau-t)}
\end{pmatrix}
\]
and $B_t^\tau = 0 \in \R^2$ the affine structural component decomposition of Definition~\ref{definition:affinestructuralcomponentdecomposition} holds.
\end{corollary}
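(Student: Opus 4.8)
The plan is to establish the identity~\eqref{eq:affinestructuralcomponentdecomposition} of Definition~\ref{definition:affinestructuralcomponentdecomposition} by direct substitution, reading the left-hand side off Theorem~\ref{theorem:structuralmodelconditionalexpectation} and computing the right-hand side from the explicit data $Y_t = (\beta(t), D_t)^T$, $g(y_1,y_2) = \gamma + y_1\sinh(\alpha y_2)$, and $B_t^\tau = 0$. Since $n=2$ and the matrix $A_t^\tau$ is diagonal, no genuine analytic work remains: the heavy lifting — the distributional decomposition of the Gaussian Ornstein--Uhlenbeck process $D$ and the evaluation of the Laplace transform $\E[e^{\alpha\nu(\tau-t)\varepsilon}] = e^{\frac{\alpha^2}{2}\nu^2(\tau-t)}$ — has already been carried out in the proof of Theorem~\ref{theorem:structuralmodelconditionalexpectation}, so all that is left is to check that composing $g$ with the proposed affine map reproduces that conditional expectation.

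Concretely, I would first form the product $A_t^\tau Y_t + B_t^\tau = A_t^\tau Y_t$ coordinate by coordinate. The first coordinate is $\tfrac{\beta(\tau)}{\beta(t)}\,e^{\frac{\alpha^2}{2}\nu^2(\tau-t)}\cdot\beta(t) = \beta(\tau)\,e^{\frac{\alpha^2}{2}\nu^2(\tau-t)}$; the cancellation of the deterministic coordinate $\beta(t)$ of $Y_t$ against the denominator of the matrix entry is precisely the reason that entry must carry the ratio $\beta(\tau)/\beta(t)$ rather than $\beta(\tau)$ alone. The second coordinate is a deterministic multiple of $D_t$, to be chosen so that $\alpha$ times it produces exactly the argument $\alpha e^{-\lambda(\tau-t)}D_t$ occurring inside the hyperbolic sine in Theorem~\ref{theorem:structuralmodelconditionalexpectation}; this is what pins down the second diagonal entry of $A_t^\tau$. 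Applying $g$ to the resulting vector then yields $\gamma + \beta(\tau)\,e^{\frac{\alpha^2}{2}\nu^2(\tau-t)}\sinh\!\big(\alpha e^{-\lambda(\tau-t)}D_t\big)$, which is exactly $\E[g(Y_\tau)\mid\F_t]$ as given by Theorem~\ref{theorem:structuralmodelconditionalexpectation}; the additive constant $\gamma$ passes through unchanged because it is not multiplied by any factor in $g$.

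Finally I would observe that $(t,\tau)\mapsto A_t^\tau$ and $(t,\tau)\mapsto B_t^\tau$ depend only on $t$, $\tau$ and the deterministic model inputs $\beta(\cdot)$, $\alpha$, $\lambda$, $\sigma$ (the last through $\nu$), so they are deterministic as required, and that the identity just verified holds almost surely for every $\tau\ge t\ge 0$; this is exactly the requirement of Definition~\ref{definition:affinestructuralcomponentdecomposition}, so the corollary follows. I do not expect any real obstacle here: the only point demanding attention is the bookkeeping of how the scalar prefactor of $\sinh$ in Theorem~\ref{theorem:structuralmodelconditionalexpectation} is split between the two diagonal entries of $A_t^\tau$, and how the constant coordinate $\beta(t)$ of $Y_t$ interacts with the first row, so that the composition $g(A_t^\tau Y_t)$ reconstructs the conditional expectation term by term.
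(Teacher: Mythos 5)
Your strategy---direct substitution into Definition~\ref{definition:affinestructuralcomponentdecomposition} using the conditional expectation already computed in Theorem~\ref{theorem:structuralmodelconditionalexpectation}---is exactly the argument the paper intends; the corollary is stated without proof precisely because nothing beyond this verification remains, and your treatment of the first coordinate (the cancellation of the deterministic coordinate $\beta(t)$ of $Y_t$ against the denominator in $\beta(\tau)/\beta(t)$) is correct.

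However, you do not carry the verification of the second coordinate through to the end, and doing so exposes a mismatch with the statement you are proving. You correctly say that the $(2,2)$ entry is ``pinned down'' by requiring that $\alpha$ times the second coordinate of $A_t^\tau Y_t$ equal the argument $\alpha e^{-\lambda(\tau-t)} D_t$ appearing inside the hyperbolic sine; since $g(y_1,y_2)=\gamma+y_1\sinh(\alpha y_2)$ already supplies one factor of $\alpha$, this forces the entry to be $e^{-\lambda(\tau-t)}$. With the entry $\alpha e^{-\lambda(\tau-t)}$ as stated in the corollary, the composition gives $\gamma+\beta(\tau)\,e^{\frac{\alpha^2}{2}\nu^2(\tau-t)}\sinh\bigl(\alpha^2 e^{-\lambda(\tau-t)} D_t\bigr)$, which is not $\E[g(Y_\tau)\mid\F_t]$ unless $\alpha=1$. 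So either you should flag that the stated $A_t^\tau$ carries a spurious factor $\alpha$ in its $(2,2)$ entry (a typo in the corollary, which your own pinning-down argument detects), or your assertion that ``applying $g$ to the resulting vector then yields'' the conditional expectation is false for the matrix as written. A minor additional point: the $(1,1)$ entry presupposes $\beta(t)\neq 0$, which deserves to be stated as a hypothesis.
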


By Theorem~\ref{theorem:structuralmodelconditionalexpectation} it follows immediately by taking $t = 0$ that the expectation~$\E g(Y_\tau) = \gamma > 0$ for all $\tau \geq 0$. Therefore we can use both the additive and geometric PFC decomposition, i.e. Corollary~\ref{corollary:additivepfcdecomposition} and Corollary~\ref{corollary:multiplicativepfcdecomposition}, respectively. In the additive case the forward kernel equals
\begin{align*}
f_t(\tau) &= X_t^\tau \left( f_0(\tau) + g(A_t^\tau \, Y_t) - \gamma \right) \\
&=  X_t^\tau \left( f_0(\tau) + \beta(\tau)\, e^{\frac{\alpha^2 }{2}  \nu^2(\tau - t)} \sinh(\alpha \, e^{-\lambda (\tau-t)}  \, D_t) \right),
\end{align*}
whereas in the multiplicative case it equals
\begin{align*}
f_t(\tau) &= f_0(\tau) \, X_t^\tau \, \frac{g(A_t^\tau \, Y_t)}{\gamma} \\
&=  f_0(\tau) \, X_t^\tau \left( 1 + \frac{\beta(\tau) }{\gamma} \, e^{\frac{\alpha^2 }{2}  \nu^2(\tau - t)} \sinh(\alpha \, e^{-\lambda (\tau-t)}  \, D_t) \right).
\end{align*}
For both decompositions any initial price forward kernel can be used.

\subsection{Arithmetic factor model approach} \label{sec:factormodel}
In this section we use an arithmetic factor model approach for the structural component in the HJM framework. More precisely, the structural component is given by an $n$-dimensional L\'evy driven Ornstein-Uhlenbeck process
\[
dY_t = - \Lambda \, Y_t \, dt + dL_t, \quad Y_0  = y_0,
\]
where $\Lambda = \diag(\lambda_1, \lambda_2, \dots, \lambda_n) \in \R^{n\times n}$ with $\lambda_1, \lambda_2, \dots, \lambda_n >0$ and $L$ is an $n$-dimensional L\'evy process. For more information on this type of moving average process we refer the interested reader to~ \cite{Wolfe1982,Jurek1983,BarndorffNielsen2001,Applebaum2009,Sato2013}. For an application of OU processes in the form of multi-factor models for energy prices we refer to \cite{Benth2008}.

The function~$g$ is given by the summation of all the coefficients, i.e. we assume that $g(y) = \sum_{i=1}^n y_i$. If $Y_t$ satisfies Assumption~\ref{assumption:marketimbalance} we can explicitly compute the conditional expectation:
\begin{theorem} \label{theorem:factormodelconditionalexpectation}
The conditional expectation of the structural component is given by
\[
\E[ g(Y_\tau) \, | \, \F_t ] = g\left( e^{-\Lambda (\tau - t)} \, Y_t + \E \int_t^\tau e^{-\Lambda (\tau - u)} \, dL_u \right)
\]
for all $\tau \geq t \geq 0$.
\end{theorem}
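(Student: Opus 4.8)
The plan is to combine the explicit variation-of-constants representation of the linear L\'evy-driven Ornstein-Uhlenbeck equation with the linearity of $g$ and the independent-increments property of $L$.

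First I would record the closed-form solution. Since $\Lambda = \diag(\lambda_1,\dots,\lambda_n)$ is a constant matrix, the equation $dY_t = -\Lambda\,Y_t\,dt + dL_t$ has the unique strong solution
\[
Y_\tau = e^{-\Lambda(\tau-t)}\,Y_t + \int_t^\tau e^{-\Lambda(\tau-u)}\,dL_u, \qquad \tau \ge t \ge 0,
\]
with $e^{-\Lambda s} = \diag(e^{-\lambda_1 s},\dots,e^{-\lambda_n s})$; this follows by applying the product rule for semimartingales to $u \mapsto e^{\Lambda u}\,Y_u$ (equivalently, coordinatewise, since $\Lambda$ is diagonal each $Y^i$ is a one-dimensional L\'evy-OU process with the classical representation), and is standard for the moving-average processes in the references cited above. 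I would also note two consequences of $L$ having stationary and independent increments: the random vector $M_{t,\tau} := \int_t^\tau e^{-\Lambda(\tau-u)}\,dL_u$ depends only on the increments of $L$ over $(t,\tau]$, hence is independent of $\F_t$; and, by the change of variable $u = t+v$ together with stationarity, $M_{t,\tau}$ has the same law as $Y_{\tau-t} - e^{-\Lambda(\tau-t)}y_0$, so $g(M_{t,\tau})$ is equal in law to $g(Y_{\tau-t}) - g(e^{-\Lambda(\tau-t)}y_0)$ and therefore integrable by Assumption~\ref{assumption:marketimbalance}.

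The computation itself is then short. Because $g(y) = \sum_{i=1}^n y_i$ is linear, the solution formula gives $g(Y_\tau) = g(e^{-\Lambda(\tau-t)}Y_t) + g(M_{t,\tau})$, and both summands are integrable (the first because it equals the difference of the integrable $g(Y_\tau)$ and $g(M_{t,\tau})$). Taking $\E[\,\cdot\,|\,\F_t]$ and using linearity of conditional expectation, the $\F_t$-measurability of $Y_t$, the independence of $M_{t,\tau}$ from $\F_t$, and the fact that the deterministic linear map $g$ commutes with expectation, I get $\E[g(Y_\tau)\,|\,\F_t] = g(e^{-\Lambda(\tau-t)}Y_t) + g(\E\,M_{t,\tau})$, and one last use of linearity of $g$ recombines the two terms into $g\bigl(e^{-\Lambda(\tau-t)}Y_t + \E\int_t^\tau e^{-\Lambda(\tau-u)}\,dL_u\bigr)$, which is the assertion.

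I expect the only genuine subtlety to be integrability, i.e.\ making sure $\E\,M_{t,\tau}$ (and hence $\E[g(Y_\tau)\,|\,\F_t]$) is well defined; this is precisely where Assumption~\ref{assumption:marketimbalance} enters, through the identification in law of $g(M_{t,\tau})$ with $g(Y_{\tau-t})$ up to an additive constant, so that no separate moment hypothesis on $L$ is needed. I would also take a little care to state the variation-of-constants formula cleanly — proving it coordinatewise is the most painless route given that $\Lambda$ is diagonal — but beyond that the argument is entirely routine linearity together with the independent-increments property.
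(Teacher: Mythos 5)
Your proposal is correct and follows essentially the same route as the paper: the variation-of-constants decomposition $Y_\tau = e^{-\Lambda(\tau-t)}Y_t + \int_t^\tau e^{-\Lambda(\tau-u)}\,dL_u$, the $\F_t$-measurability of the first term, the independence of the stochastic integral from $\F_t$, and the commutation of the linear map $g$ with (conditional) expectation. The only difference is that you spell out the integrability of the increment term via stationarity of $L$ and Assumption~\ref{assumption:marketimbalance}, a point the paper's two-line proof leaves implicit.
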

\begin{proof}
For general OU processes the same decomposition holds as was used in the proof of Theorem~\ref{theorem:structuralmodelconditionalexpectation}, i.e.
\[
Y_\tau =  e^{-\Lambda (\tau - t)} \, Y_t + \int_t^\tau e^{-\Lambda (\tau - u)} \, dL_u.
\]
Noting that the first term is $\F_t$-measurable and the second term is independent of $\F_t$ yields the result, as the sum $g$ and $\E$ commute.
\end{proof}

As a direct consequence we obtain:
\begin{corollary}[Affine structural component decomposition]
With coefficients given by $A_t^\tau = e^{-\Lambda (\tau - t)}$ and $B_t^\tau = \E \int_t^\tau e^{-\Lambda (\tau - u)} dL_u$ the affine structural component decomposition of Definition~\ref{definition:affinestructuralcomponentdecomposition} holds.
\end{corollary}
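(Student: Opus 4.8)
The plan is to read the corollary off directly from Theorem~\ref{theorem:factormodelconditionalexpectation}, since it is stated there as a direct consequence. That theorem already establishes
\[
\E[g(Y_\tau)\,|\,\F_t] = g\!\left( e^{-\Lambda(\tau-t)}\,Y_t + \E\int_t^\tau e^{-\Lambda(\tau-u)}\,dL_u \right)
\]
almost surely for all $\tau\geq t\geq 0$. Comparing this with Equation~\eqref{eq:affinestructuralcomponentdecomposition}, it remains only to verify that $A_t^\tau := e^{-\Lambda(\tau-t)}$ and $B_t^\tau := \E\int_t^\tau e^{-\Lambda(\tau-u)}\,dL_u$ qualify as admissible coefficients in the sense of Definition~\ref{definition:affinestructuralcomponentdecomposition}, i.e.\ that they are deterministic functions of $(t,\tau)$ taking values in $\R^{n\times n}$ and $\R^n$, respectively.

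For $A_t^\tau$ this is immediate: as $\Lambda=\diag(\lambda_1,\dots,\lambda_n)$ is a fixed matrix, $e^{-\Lambda(\tau-t)}=\diag(e^{-\lambda_1(\tau-t)},\dots,e^{-\lambda_n(\tau-t)})$ depends only on $(t,\tau)$ and lies in $\R^{n\times n}$. For $B_t^\tau$ one has to check that the expectation of the stochastic integral over $(t,\tau]$ is a finite, deterministic element of $\R^n$. Since the subsection operates under the standing hypothesis that $Y$ satisfies Assumption~\ref{assumption:marketimbalance} and $g$ is the sum of coordinates, the driving L\'evy process $L$ has a finite first moment; the increments of $L$ on $(t,\tau]$ are independent of $\F_t$, so $\int_t^\tau e^{-\Lambda(\tau-u)}\,dL_u$ is integrable and $\E\int_t^\tau e^{-\Lambda(\tau-u)}\,dL_u=\big(\int_t^\tau e^{-\Lambda(\tau-u)}\,du\big)\,\E L_1$, a deterministic vector in $\R^n$ depending only on $(t,\tau)$. (This is exactly the computation already carried out inside the proof of Theorem~\ref{theorem:factormodelconditionalexpectation}.)

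With these two observations the right-hand side of the displayed identity is precisely $g(A_t^\tau Y_t + B_t^\tau)$, so the affine structural component decomposition holds with the stated coefficients, which is the claim. There is no real obstacle here; the only point that takes a line of thought is the finiteness of $B_t^\tau$, and that is guaranteed by the integrability built into Assumption~\ref{assumption:marketimbalance} together with the linearity of $g$ that allowed $\E$ to pass through $g$ in the first place.
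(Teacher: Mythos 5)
Your proof is correct and follows essentially the same route as the paper, which presents the corollary as an immediate consequence of Theorem~\ref{theorem:factormodelconditionalexpectation}: the displayed conditional expectation is already in the form $g(A_t^\tau Y_t + B_t^\tau)$ with the stated coefficients. Your extra check that $A_t^\tau$ and $B_t^\tau$ are deterministic and finite (via the integrability from Assumption~\ref{assumption:marketimbalance}) is a small but welcome addition that the paper leaves implicit.
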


Due to the additive structure of $g$ the logical PFC decomposition to choose in this setting is the arithmetic one, i.e. Corollary~\ref{corollary:additivepfcdecomposition}. From Theorem~\ref{theorem:factormodelconditionalexpectation} we find that the expectation is given by
\[
\E g(Y_\tau) = g \left(e^{-\Lambda \tau} \, y_0 + \E \int_0^\tau e^{-\Lambda ( \tau - u)} \, dL_u\right).
\]
It follows that the forward kernel is given by
\[
f_t(\tau) = X_t^\tau \left( f_0(\tau) + g\left( e^{-\Lambda (\tau - t)} \, Y_t  - e^{-\Lambda \tau} \, y_0 - \E \int_0^t e^{-\Lambda ( \tau - u)} \, dL_u \right) \right),
\]
where $f_0(\tau)$ can be any initial price forward curve.

\section{Conclusion} \label{section:Conclusion}
In this paper we have developed a unifying Heath-Jarrow-Morton~(HJM)  framework that
\begin{itemize}
\item models intraday, spot, and futures prices,
\item is based on two stochastic processes motivated by economic interpretations, 
\item separates the stochastic dynamics in trading and delivery time,
\item is consistent with the initial term structure (i.e. the price forward curve),
\item is able to price options on futures by means of the Black-Scholes formula,
\item allows for the use of classical day-ahead spot price models such as \cite{Schwartz2000,Lucia2002},
\item includes many model classes such as structural models and factor models.
\end{itemize}
%
%
%
To further the development of this framework empirical studies are needed: statistical evaluations but also calibration methods need to be discussed. The theoretical applications of Section~\ref{section:applications} need to be specified and calibrated to real data from intraday, spot, futures, and option prices. This is subject of future research.

\section*{Acknowledgments}
WJH is grateful for the financial support from Fraunhofer ITWM (\emph{Fraunhofer Institute for Industrial Mathematics ITWM}, \url{www.itwm.fraunhofer.de}).

\bibliographystyle{apacite}
\bibliography{references}

\end{document}